\documentclass[a4paper,11pt,final]{article}
\usepackage[utf8]{inputenc}
\usepackage[english]{babel}
\usepackage[T1]{fontenc}
\usepackage{lmodern}

\usepackage{amsmath}            
\usepackage{amsfonts}           
\usepackage{amsthm}             
\usepackage{amssymb}             
\usepackage{graphicx}           
\usepackage{natbib}             


\usepackage[nottoc]{tocbibind}  
\usepackage{float}              
\usepackage[colorinlistoftodos,prependcaption,textsize=tiny,obeyFinal]{todonotes}
\usepackage{xargs}              

\newcommandx{\unsure}[2][1=]{\todo[linecolor=red,backgroundcolor=red!25,bordercolor=red,#1]{#2}}
\newcommandx{\change}[2][1=]{\todo[linecolor=orange,backgroundcolor=orange!25,bordercolor=orange,#1]{#2}}
\newcommandx{\info}[2][1=]{\todo[linecolor=green,backgroundcolor=green!25,bordercolor=green,#1]{#2}}
\newcommandx{\improvement}[2][1=]{\todo[linecolor=blue,backgroundcolor=blue!25,bordercolor=blue,#1]{#2}}

\theoremstyle{plain}
\newtheorem{theorem}{Theorem}
\newtheorem{example}[theorem]{Example}



\DeclareMathOperator{\dive}{div}



\newcommand{\x}{(\mathbf{x})}


\newcommand{\sy}{(\mathbf{s}^\mathbf{y})}
\newcommand{\xt}{(\mathbf{x}, t)}
\newcommand{\xty}{(\mathbf{x}^\mathbf{y}, t)}
\newcommand{\sty}{(\mathbf{s}^\mathbf{y}, t)}
\newcommand{\xxt}{(\mathbf{x'}, \mathbf{x}, t)}

\newcommand{\xx}{(\mathbf{x'}, \mathbf{x})}
\newcommand{\xxy}{(\mathbf{x'}^\mathbf{y}, \mathbf{x}^\mathbf{y})}
\newcommand{\ABt}{(A, B, t)}
\newcommand{\ABty}{(A^\mathbf{y}, B^\mathbf{y}, t)}

\newcommand{\ABy}{(A^\mathbf{y}, B^\mathbf{y})}
\newcommand{\snt}{(\mathbf{s}, \mathbf{n}, t)}
\newcommand{\snty}{(\mathbf{s}^\mathbf{y}, \mathbf{n}^\mathbf{y}, t)}
\newcommand{\sn}{(\mathbf{s}, \mathbf{n})}
\newcommand{\sny}{(\mathbf{s}^\mathbf{y}, \mathbf{n}^\mathbf{y})}
\newcommand{\dx}{\mathrm{d}}

%
%
%
\title{The~Peridynamic Stress Tensors\\ and~the~Non-local to~Local Passage}

\author{Petr Pelech\footnote{Mathematical Institute, Charles University, Sokolovská 83, CZ-186 75 Praha 8 (pelech@karlin.mff.cuni.cz )}}

\begin{document}

\maketitle
\begin{abstract}
\noindent
We re-examine the~notion of~stress in~peridynamics.
Based on~the~idea of~traction
we define two new peridynamic stress tensors $\mathbf{P}^{\mathbf{y}}$ and $\mathbf{P}$ 
which stand, respectively, for~analogues of~the~Cauchy and 1st Piola-Kirchhoff stress tensors from classical elasticity.
We show that the~tensor $\mathbf{P}$ differs from the~earlier defined peridynamic stress tensor $\nu$;
though their divergence is equal.
We address the~question of~symmetry of the~tensor $\mathbf{P}^{\mathbf{y}}$
which proves to~be symmetric in~case of~bond-based peridynamics;
as~opposed to~the~inverse Piola transform of~$\nu$ (corresponding to~the~analogue of~Cauchy stress tensor)
which fails to~be symmetric in~general.
We also derive a~general formula of~the~force-flux in~peridynamics
and compute the~limit of~$\mathbf{P}$ for~vanishing non-locality, denoted by~$\mathbf{P}_0$.
We show that this tensor $\mathbf{P}_0$ surprisingly coincides with the~collapsed tensor $\nu_0$,
a~limit of~the~original tensor~$\nu$.
At~the~end, using this flux-formula, we suggest an~explanation
why the~collapsed tensor $\mathbf{P}_0$ (and hence $\nu_0$)
can be indeed identified with the~1st Piola-Kirchhoff stress tensor.

\end{abstract}

\emph{Key words}: Peridynamics, Non-local theory, Stress, Flux, Continuum mechanics

\section*{Introduction}
Peridynamics is a~non-local model in~continuum mechanics
introduced in~\cite{Silling2000} and elaborated later in~\cite{Silling2007}
(see also \cite{SillingLehoucq2010} or \cite{EmmrichLehoucq2012} for survey of most important results).
The non-locality is reflected in~the~fact
that points at~a~finite distance exert a~force upon each other.
This force interaction is~described by~a~\emph{pairwise force function}
$\mathbf{f}:\Omega\times\Omega\times[0,T] \rightarrow \mathbb{R}^3$,
where $\Omega \subset \mathbb{R}^3$ denotes the~body in~the~reference configuration
and~$[0,T]$, with $T>0$, is the~time interval of~interest.
If, however,
the points are in~the~reference configuration more distant than a~characteristic length called \emph{horizon},
it is customary to assume that they do not interact.
This is~stated in~the~assumption that
\begin{align} \label{eq:PFFHorizon}
  \mathbf{f}\xxt = \mathbf{0}
  \quad \text{whenever} \quad |\mathbf{x}' - \mathbf{x}| \geq \delta,
\end{align}
where $\delta > 0$ denotes the~horizon length (or~just the~horizon for~simplicity).
The~meaning of~horizon is that it represents an~internal material length scale
(see e.g. \cite{BobaruHu2012}, \cite{SillingLehoucq2008} and \cite{SillingLehoucq2010}
for further explanation and examples).
The~force $F(A, B, t)$ which one part of~the~body $A \subset \Omega$
exerts on~another part $B \subset \Omega$ at~time $t$
is a~summation of~all point interactions
\begin{align} \label{eq:PFFunction}
  F\ABt
  = \int_B \int_A
      \mathbf{f}\xxt \,
    \dx \mathbf{x}' \dx \mathbf{x}.
\end{align}
Hence even disjoint parts $A, B \subset \Omega$ may interact with each other.
From this formula it is also obvious that the dimension of~$\mathbf{f}$ is force per volume squared.
The~equation of~motion in~peridynamics then takes the~form
\begin{align} \label{eq:ForceBal}
  \rho_0\x \ddot{\mathbf{y}}\xt
  = \int_\Omega \mathbf{f}\xxt \,
    \dx \mathbf{x}'
    +
    \mathbf{b}\xt,
  \quad \text{for all } \xt \in \Omega\times[0, T],
\end{align}
where $\ddot{\mathbf{y}}$ is~the~second time derivative of~the~deformation
$\mathbf{y}:\Omega\times[0,T] \rightarrow \mathbb{R}^3$,
$\rho_0:\Omega \rightarrow (0, +\infty)$ the~density in~the~reference configuration,
and~$\mathbf{b}:\Omega\times[0,T] \rightarrow \mathbb{R}^3$ the~density of~external forces
with respect to~the~volume in~the~reference configuration.
The~specific form of~$\mathbf{f}$ is matter of~a~constitutive theory
and it usually involves deformation in~a~non-local way.

This is in~contrast to~standard local theories of~simple materials
(cf. \cite{GurtinAnand2010} or \cite{Ciarlet1988})
where two \emph{adjacent} parts of~the~deformed body interacts through a~common surface.
The~interaction is~described by~the~Cauchy stress vector
$\mathbf{t}^\mathbf{y}:\mathbf{y}(\Omega)\times\mathbb{S}^2\times[0, T] \rightarrow \mathbb{R}^3$
which depends on~the~position in~the~deformed configuration,
the~surface normal vector at~that point ($\mathbb{S}^2$ denotes the~unit sphere in~$\mathbb{R}^3$),
and~time.
This vector represents the~surface density of~that force interaction.
Hence the~force between two~adjacent spatial regions
$A^\mathbf{y} \subset \mathbf{y}(\Omega)$
and~$B^\mathbf{y} \subset \mathbf{y}(\Omega)$ at~time $t$
is~expressed by~the~surface integral
\begin{align} \label{eq:CVector}
  F\ABty
  = \int_{\partial A^\mathbf{y} \cap \partial B^\mathbf{y}}
      \mathbf{t}^\mathbf{y}\snty \,
    \dx S(\mathbf{s}^\mathbf{y}),
\end{align}
where $\mathbf{n}^\mathbf{y}$ denotes the~outer normal at~the~point $\mathbf{s}^\mathbf{y}$.
By~the~Cauchy theorem there exist the~Cauchy stress tensor
$\mathbf{T}^\mathbf{y}:\mathbf{y}(\Omega)\times[0,T] \rightarrow \mathbb{R}^{3\times 3}$
such that
\begin{align} \label{eq:CTensor}
  \mathbf{t}^\mathbf{y}\snty
  = \mathbf{T}^\mathbf{y}\sty \mathbf{n}^\mathbf{y}
\end{align}
for~all $\snty \in \mathbf{y}(\Omega)\times\mathbb{S}^2\times[0, T]$,
i.e. the~dependence on~the~unit normal is linear.
The~Gauss theorem then implies that
\begin{align} \label{eq:CGauss}
  \int_{A^\mathbf{y}}
    \dive^\mathbf{y} \mathbf{T}^\mathbf{y}\xty \,
  \dx \mathbf{x}^\mathbf{y}
  = \int_{\partial A^\mathbf{y}}
      \mathbf{T}^\mathbf{y}\sty \mathbf{n}^\mathbf{y} \,
    \dx S(\mathbf{s}^\mathbf{y})
\end{align}
for~any $t \in [0, T]$ and~$A^\mathbf{y} \subset \mathbf{y}(\Omega)$ smooth enough,
where $\dive^\mathbf{y}$ means the~divergence with~respect to~the~spatial variables
in~the~deformed configuration.
Therefore, using \eqref{eq:CTensor} and~\eqref{eq:CVector},
the~divergence of~the~Cauchy tensor expresses the~volume density of~internal forces
with respect to~the~volume in~the~deformed configuration.
The~first Piola~Kirchhoff stress tensor
$\mathbf{T}:\Omega\times[0,T] \rightarrow \mathbb{R}^{3\times 3}$ is~defined
as~the~Piola transform of~the~Cauchy tensor
\begin{align} \label{eq:PiolaTrafo}
  \mathbf{T}\xt
  = (\det \nabla \mathbf{y}\xt)
    \mathbf{T}^\mathbf{y}\xty
    (\nabla \mathbf{y}\xt)^{-\top},
    \quad
    \mathbf{x}^\mathbf{y} = \mathbf{y}(\mathbf{x}).
\end{align}
Thanks to~the~properties of~this transform,
the~following equality holds
\begin{align*}
  \int_{A}
    \dive \mathbf{T}\xt \,
  \dx \mathbf{x}
  = \int_{A^\mathbf{y}}
      \dive^\mathbf{y} \mathbf{T}^\mathbf{y}\xty \,
    \dx \mathbf{x}^\mathbf{y}
\end{align*}
for~any~$t \in [0, T]$, where $A \subset \Omega$, $A^\mathbf{y} = \mathbf{y}(A)$,
and~'$\dive^\mathbf{y}$' stands for~spatial divergence in~the~reference configuration.
The~divergence of~the~first Piola-Kirchhoff stress tensor
then expresses the~density of~internal forces
with respect to~the~volume in~the~reference configuration.
Hence the~equation of~motion in~the~reference configuration takes the~form
\begin{align*}
  \rho_0\x \ddot{\mathbf{y}}\xt
  = \dive \mathbf{T}\xt
    +
    \mathbf{b}\xt,
  \quad \text{for all } \xt \in \Omega\times[0, T].
\end{align*}

The~divergence of~any stress tensor
provides knowledge only of~the~total force flux through closed surfaces
which is, however, not sufficient for~building the~whole theory.
For~the~formulation of~the~balance of~angular momentum the~whole tensor is needed.
It can be shown that this balance is equivalent to~the~symmetry of~the~Cauchy stress tensor,
i.e.
\begin{align*}
  \mathbf{T}^\mathbf{y} = {\mathbf{T}^\mathbf{y}}^{\top}.
\end{align*}
A~corresponding condition for~the~first Piola-Kirchhoff
tensor can be derived from \eqref{eq:PiolaTrafo}.


The~question whether such different concepts of~interaction can be~related to~each other
was addressed already in~the~pioneering work \cite{Silling2000}.
Here the~\emph{areal force density} at~a~point $\mathbf{s} \in \Omega$ and~time $t \in [0, T]$
in~the~direction of~unit vector $\mathbf{n}$ in~the~reference configuration
is defined as
\begin{align*}
  \tau\snt
  = \int_{\mathcal{L}\sn} \int_{\Omega^+\sn}
      \mathbf{f}(\mathbf{x}', \hat{\mathbf{x}}, t) \,
    \dx \mathbf{x}' \dx l(\hat{\mathbf{x}}),
\end{align*}
where
\begin{align*}
  \Omega^+\sn &= \{ \mathbf{x}' \in \Omega: (\mathbf{x}' - \mathbf{s}) \cdot \mathbf{n} \geq 0 \},
  \\
  \mathcal{L}\sn &= \{ \hat{\mathbf{x}} \in \Omega: \exists s \geq 0 \text{ s.t. }\hat{\mathbf{x}} = \mathbf{s} - s\mathbf{n} \},
\end{align*}
and~'$\dx l$' represents a~length element.
As~was already mentioned in~\cite{Silling2000},
this definition of~$\mathbf{\tau}$ is most useful
in~the~case of~a~homogeneous deformation
(i.e. a~deformation whose gradient is a~constant matrix).
Moreover, its linear dependence on~$\mathbf{n}$ is not obvious,
and~it provides no~explicit formula for~the~first Piola-Kirchhoff tensor
in~terms of~the~pairwise force function $\mathbf{f}$.
%
%
This~issue was somehow overcome later in~\cite{SillingFlux2008}
where the~\emph{peridynamic stress tensor} $\mathbf{\nu}$ was defined as
\begin{align} \label{eq:nu}
  \mathbf{\nu} \xt
  =
  \frac{1}{2}
    \int\limits_{\mathbb{S}^{2}} \int\limits_0^{+\infty} \int\limits_0^{+\infty}
      (\alpha + \beta)^2 \mathbf{f}(\mathbf{x} + \alpha\mathbf{m}, \mathbf{x} - \beta\mathbf{m}, t) \otimes \mathbf{m} \,
    \dx \alpha \dx \beta \dx S(\mathbf{m}),
\end{align}
though no~connection to~areal force density was provided.
%
%
Note that through the~function $\mathbf{f}$ the~dependence of~$\nu$ on~the~deformation is non-local.
The~spatial divergence of~this tensor is equal to~the~density of~internal
forces in~the~reference configuration, i.e.
\begin{align*}
  \dive \mathbf{\nu} \xt
  = \int_\Omega \mathbf{f}\xxt \,
    \dx \mathbf{x}',
  \quad
  \xt \in \Omega\times [0, T],
\end{align*}
and~so~the~equation of~motion in~peridynamics can be~rewritten to~the~form
\begin{align*}
  \rho_0\x \ddot{\mathbf{y}}\xt
  = \dive \mathbf{\nu}\xt
    +
    \mathbf{b}\xt,
  \quad \text{for all } \xt \in \Omega\times[0, T]
\end{align*}
which is~formally similar to~the~equation of~motion in~conventional theory.
This fact is~subsequently used in~\cite{SillingLehoucq2008}
where the~convergence of~peridynamics to~the conventional theory
is~investigated for the~horizon tending to~zero.
After performing a~scaling,
the measure of~the~non-locality is represented by~a~dimensionless paremeter $s \rightarrow 0$.
It~is shown that for~a~sufficiently smooth fixed deformation and constitutive relation
\begin{align*}
  \lim_{s \rightarrow 0} \mathbf{\nu} \xt = \mathbf{\nu}_0 \xt,
  \quad \xt \in \Omega \times [0,T].
\end{align*}
The~tensor $\mathbf{\nu}_0\xt$
is called the~\emph{collapsed peridynamic tensor}
and~it depends on~the~deformation only through the~deformation gradient at~a~point.

If one is interested in~the~convergence of~the equation of~motion alone,
then the~description provided by~$\dive \nu \xt$ and~$\dive \nu_0 \xt$ is~sufficient.
Nevertheless, for~identifying the~limiting model in~standard elasticity
the~knowledge of~the~whole limiting first Piola-Kirchhoff stress tensor $\mathbf{T}_0$ is~necessary.
Since the~$\dive \nu$ expresses the~volume density of~internal forces,
its limit $\dive \nu_0$ express the~density of~internal forces in~the~limiting model
and therefore it~must hold
\begin{align*}
  \dive \mathbf{T}_0 = \dive \nu_0.
\end{align*}
This requirement, however, determines the~tensor $\mathbf{T}_0$ only up~to~an~additive so\-le\-noi\-dal tensor field.
Hence a~closer connection between these two stress tensors is~needed.

The~same complication is~related to~the peridynamic tensor $\nu$
which is~suggested in~\cite{SillingFlux2008} as~an~analogue
of~the~first Piola-Kirchhoff stress tensor $\mathbf{T}$.
This suggestion is~also based only on~the~divergence of~$\nu$
which is, for~reason mentioned above, not sufficient for~the~identification of~the~whole tensor.
In~fact, any such an~identification of~$\nu$ and~$\mathbf{T}$ is~problematic.
One reason is connected to~the~balance of~angular momentum.
If~one considers a~sufficiently regular deformation,
then the~inverse Piola transform of~$\nu$ (which would be~an~analogue of~the~Cauchy tensor)
is \emph{not}~symmetric in~general.
Even though the~balance of~angular momentum in~peridynamics is satisfied
and only materials with no internal structure
(also called non-polar materials in~\cite{SillingLehoucq2010}) are considered;
therefore the~analogy between the~peridynamic tensor
and~the~first Piola-Kirchhoff tensor remains rather unclear.
The~aim of~this paper is to~derive a~notion of~stress in~peridynamics in a~consistent way.
In~particular this means to~define another tensor $\mathbf{P}$
whose divergence would also satisfy
\begin{align*}
  \dive \mathbf{P} \xt
  = \int_\Omega \mathbf{f}\xxt \,
    \dx \mathbf{x}',
\end{align*}
i.e. it~would express the~volume density of~internal forces.
At~the~same time either its inverse Piola transform
would be~symmetric,
or some explanation of~its asymmetry would be~provided.
Last but not least, it would be~directly related to~the~Cauchy~stress \emph{vector}.
Only such a~direct relation would make it~a~real analogue
of~the~first Piola-Kirchhoff stress tensor $\mathbf{T}$,
since from~the~Cauchy stress vector any~other stress measures are~derived.
In~this way the~tensor $\mathbf{P}$ would also generalize the~areal force density $\tau\snt$
which is useful only for~homogeneous deformations
and~whose linear dependence on~the~vector $\mathbf{n}$ is unclear.

In~addition, the~tensor $\mathbf{P}$ can be used for~investigating the~limiting
behavior of~peridynamics for~vanishing non-locality in~the~same spirit
as~the~tensor $\nu$ was used.
Thanks to~its relation to~the~Cauchy stress vector,
its limiting counterpart $\mathbf{P_0}$ may~be~identified with~the~first Piola-Kirchhoff stress tensor,
and~hence it~may provide a~full description of~the~limiting model in~classical theory.

Although discontinuous deformations are~possible in~peridynamics,
we restrict ourselves to~sufficiently smooth deformations
for which all objects from~both theories are well defined.
At the~same time, the~choice of~optimal function spaces and~control volumes
is~left for~further investigation.
The~structure of~the~article is~as~follows.
In Section \ref{sec:derivation} we motivate our definition of peridynamic tensors $\mathbf{P}^{\mathbf{y}}$ and $\mathbf{P}$.
In~Section \ref{sec:properties} we show some of their properties, compare them with the peridynamic tensor $\nu$
and compute the limit of~$\mathbf{P}$ for~horizon tending to~zero.
In~the last section we derive a~general formula of~force flux in~peridynamics
which is then used for a~comparison between Cauchy stress tensor $\mathbf{T}^{\mathbf{y}}$
and peridynamic tensor $\mathbf{P}^{\mathbf{y}}$.

\section{Derivation of~the~peridynamic stress tensors} \label{sec:derivation}
We shall define two peridynamic tensors, denoted by~$\mathbf{P}^\mathbf{y}$ and~$\mathbf{P}$,
which are an~analogues of~the~Cau\-chy and~the~first Piola-Kirchhoff stress tensors respectively.
The~definition of~the~first tensor $\mathbf{P}^\mathbf{y}$
is based on~a~heuristic derivation of~an~analogue of~the~Cauchy stress vector $\mathbf{t}^\mathbf{y}$
in~terms of~the~pairwise force function $\mathbf{f}$.
The~desired peridynamic tensor $\mathbf{P}$ is then defined as~the~Piola transform
of~$\mathbf{P}^\mathbf{y}$.

In~order to~compute the~vector $\mathbf{t}^\mathbf{y}$
we~divide the~deformed body by~a~plane into~two pieces.
The~mutual force interaction between these parts
is given by a~double volume integral in~\eqref{eq:PFFunction}.
Using a~suitable substitution we rewrite it as~a~surface integral over the~dividing plane.
The~vector $\mathbf{t}^\mathbf{y}$ is set to~be equal to~the~corresponding surface density.
This~way of~deriving the~formula for~$\mathbf{t}^\mathbf{y}$ seems to~be similar
to~the~one used in~\cite{Silling2000} for~obtaining the~expression for~the~areal force density $\tau$.
The~difference is that here all the~computations are~performed in~the deformed configuration
and~the~integration over the dividing surface is~done in~a~different manner.
The~former makes the~derivation meaningful even for~non-homogeneous deformations,
the~latter results in~a~formula for~$\mathbf{t}^\mathbf{y}$
from which the~form of~the~stress tensor $\mathbf{P}^\mathbf{y}$ is~explicitly visible.

Since we want to~compute in~the~deformed configuration,
we define the~vector field $\mathbf{f}^\mathbf{y}$
as~the~corresponding density with respect to~the~volume in~the~deformed configuration
i.e.
\begin{align} \label{eq:f=fy}
  \mathbf{f}\xx
  =
  \mathbf{f}^\mathbf{y}\xxy
  (\det \nabla \mathbf{y}(\mathbf{x'})) (\det \nabla \mathbf{y}\x),
  \quad
  \mathbf{x'}^\mathbf{y} = \mathbf{y}(\mathbf{x'}),\,
  \mathbf{x}^\mathbf{y} = \mathbf{y}(\mathbf{x})
\end{align}
and
\begin{align*}
  \int_B \int_A
      \mathbf{f}\xx \,
    \dx \mathbf{x}' \dx \mathbf{x}
  = \int_{B^\mathbf{y}} \int_{A^\mathbf{y}}
      \mathbf{f}^\mathbf{y}\xxy \,
    \dx \mathbf{x'}^\mathbf{y} \dx \mathbf{x}^\mathbf{y}.
\end{align*}
From now on, for~the sake of~brevity, the~time argument will be suppressed.
For~later simplicity, we set
\begin{align}
  \mathbf{f}\xx
  &= \mathbf{0},
  &
  \text{whenever }& \mathbf{x}' \notin \Omega \text{ or } \mathbf{x} \notin \Omega, \\ \label{eq:fy=0}
  \mathbf{f}^\mathbf{y}\xxy
  &= \mathbf{0},
  &
  \text{whenever }& \mathbf{x'}^\mathbf{y} \notin \mathbf{y}(\Omega)
  \text{ or } \mathbf{x}^\mathbf{y} \notin \mathbf{y}(\Omega).
\end{align}
Consider now an~arbitrary plane $\mathcal{P}^\mathbf{y} \subset \mathbb{R}^3$ in~the~deformed configuration
which has a~normal vector $\mathbf{n}^\mathbf{y} \in \mathbb{S}^2$
and divides the~body $\mathbf{y}(\Omega)$ into two pieces.
Without loss of~generality we may choose a~Cartesian coordinate system
$\mathbf{x}^\mathbf{y} = (x^y_1, x^y_2, x^y_3)$
such that $\mathbf{n}^\mathbf{y} = (1, 0, 0)$ and~$(0, 0, 0) \in \mathcal{P}^\mathbf{y}$.
The~two parts of~the~deformed body are then given by
\begin{align} \label{eq:yOmega+}
  \mathbf{y}(\Omega)_+ = \{ \mathbf{x}^\mathbf{y} \in \mathbf{y}(\Omega): \mathbf{n}^\mathbf{y} \cdot \mathbf{x}^\mathbf{y} = x^y_1 > 0\}, \\
  \label{eq:yOmega-}
  \mathbf{y}(\Omega)_- = \{ \mathbf{x}^\mathbf{y} \in \mathbf{y}(\Omega): \mathbf{n}^\mathbf{y} \cdot \mathbf{x}^\mathbf{y} = x^y_1 < 0\}.
\end{align}
The~force which one part exerts on~the~other is~then expressed as
\begin{align*}
  F(\mathbf{y}(\Omega)_+, \mathbf{y}(\Omega)_-)
  = \int_{\mathbf{y}(\Omega)_-} \int_{\mathbf{y}(\Omega)_+}
      \mathbf{f}^\mathbf{y}\xxy \,
    \dx \mathbf{x'}^\mathbf{y} \dx \mathbf{x}^\mathbf{y}.
\end{align*}
The~line segment $[\mathbf{x'}^\mathbf{y}, \mathbf{x}^\mathbf{y}]$ given by~the~couple of~interacting points
intersects the~dividing plane $\mathcal{P}^\mathbf{y}$ at~a~unique point $\mathbf{s}^\mathbf{y}$.
The~line segment $[\mathbf{s}^\mathbf{y}, \mathbf{x'}^\mathbf{y}]$ has the~length $\alpha$
and~points in~the~outer direction $\mathbf{m}$, i.e. $\mathbf{m} \cdot \mathbf{n}^\mathbf{y} > 0$
(see Fig. \ref{fig:FluxPlane}).
The~line segment $[\mathbf{s}^\mathbf{y}, \mathbf{x}^\mathbf{y}]$ has the~length $\beta$
and~points in~the~opposite direction.
This gives rise to~a~substitution
\footnote{This calculation presented here resembles to~the~one used already by~Cauchy
(see \cite{Cauchy1828} or~\cite{Love1892}).
Since we are nevertheless interested in~large deformation in~general and~the~horizon cannot be~considered infinitesimal,
we have to~proceed differently.
}
\begin{align} \label{eq:CauchySubst}
  \mathbf{x'}^\mathbf{y} = \mathbf{s}^\mathbf{y} + \alpha \mathbf{m},
  \quad  
  \mathbf{x}^\mathbf{y} = \mathbf{s}^\mathbf{y} - \beta \mathbf{m},
\end{align}
by which the~integration over all interacting couples
$[\mathbf{x'}^\mathbf{y}, \mathbf{x}^\mathbf{y}] \in \mathbf{y}(\Omega)_+ \times \mathbf{y}(\Omega)_-$
can be~rewritten as~a~surface integral over the~contact plane $\mathcal{P}^\mathbf{y}$ of~a~corresponding surface density.
This surface density is~the~sought vector $\mathbf{t}^\mathbf{y}$.

\begin{figure}
  \centering
  \includegraphics[width=0.5\textwidth]{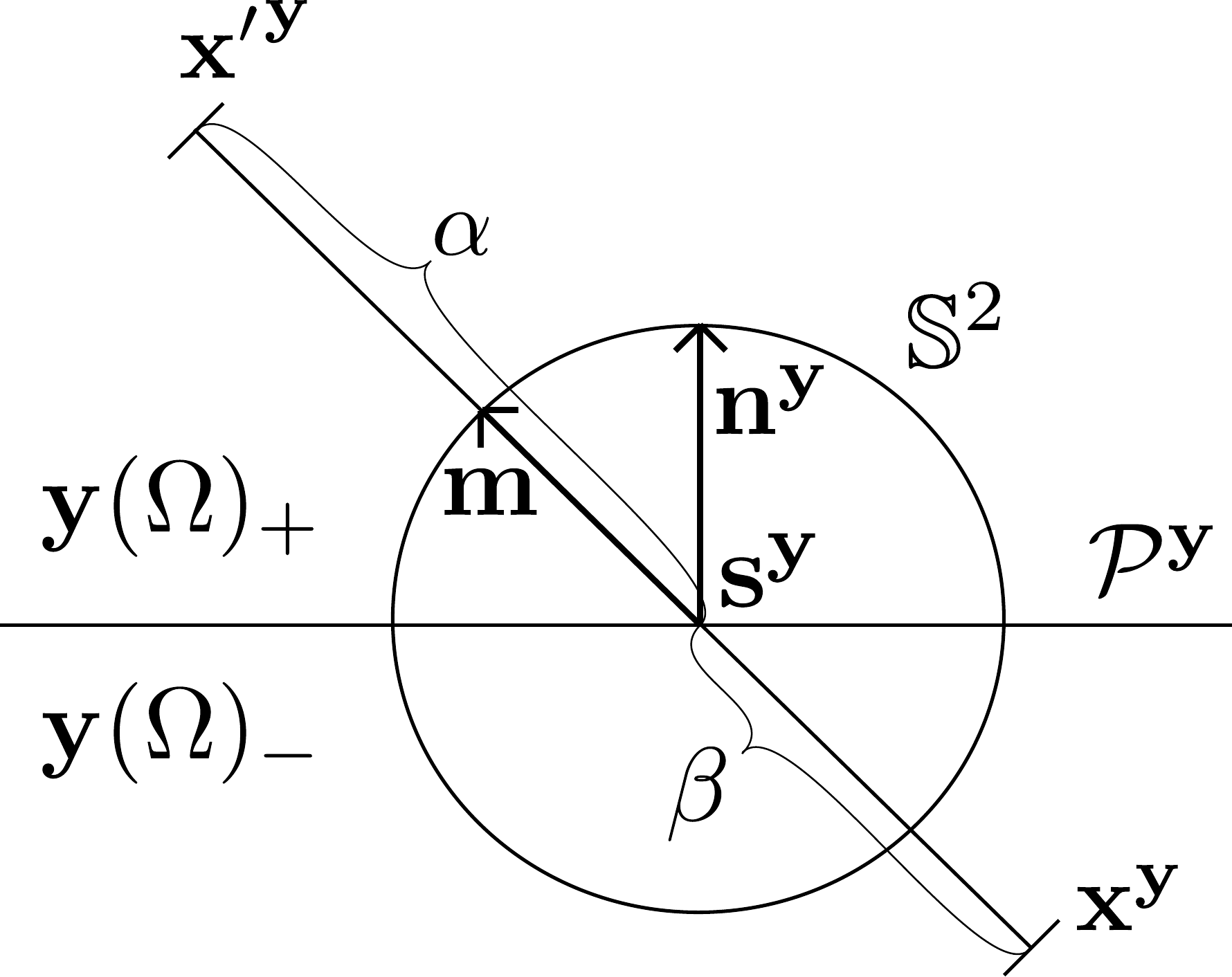}
  \caption{The~force flux through the~plane $\mathcal{P}^\mathbf{y}$}
  \label{fig:FluxPlane}
\end{figure}

In~order to~perform the~substitution properly
we rewrite the~double integral using the~coordinates
(note the~different limits of~integration for~the~coordinates $\dx x'^y_1$ and $\dx x^y_1$
which are due to~\eqref{eq:yOmega+} and \eqref{eq:yOmega-} respectively)
\begin{multline*}
  F(\mathbf{y}(\Omega)_+, \mathbf{y}(\Omega)_-) \\
  =
  \int\limits_{-\infty}^{+\infty} \int\limits_{-\infty}^{+\infty} \int\limits_{-\infty}^{0}
  \int\limits_{-\infty}^{+\infty} \int\limits_{-\infty}^{+\infty} \int\limits_{0}^{+\infty}
    \mathbf{f}^\mathbf{y}\xxy \,
  \dx x'^y_1 \, \dx x'^y_2 \, \dx x'^y_3 \,
  \dx x^y_1 \, \dx x^y_2 \, \dx x^y_3 ,
\end{multline*}
where we use the~abbreviation
\begin{align*}
  \mathbf{x'}^\mathbf{y} = (x'^y_1, x'^y_2, x'^y_3),
  \quad
  \mathbf{x}^\mathbf{y} = (x^y_1, x^y_2, x^y_3).
\end{align*}
Note that the~equality \eqref{eq:fy=0} allows us to~integrate up to~infinity.
The~substitution in~coordinates then takes the~form
\begin{align*}
  x'^y_1 &= \alpha \cos \theta, &
  x'^y_2 &= s^y_2 + \alpha \sin \theta \cos \varphi, &
  x'^y_3 &= s^y_3 + \alpha \sin \theta \sin \varphi, \\
  x^y_1 &= -\beta \cos \theta, &
  x^y_2 &= s^y_2 - \beta \sin \theta \cos \varphi, &
  x^y_3 &= s^y_3 - \beta \sin \theta \sin \varphi,
\end{align*}
where
\begin{align*}  \theta \in (0, \frac{\pi}{2}), \quad
  \varphi \in (0, 2\pi), \quad
  \alpha, \beta \in (0, +\infty), \quad
  s^y_2, s^y_3, \in (-\infty, +\infty).
\end{align*}
The~corresponding Jacobian is
\begin{multline} \label{eq:Jacobian}
  J =\left|
        \begin{array}{ c c c c c c }
          -\alpha \sin\theta            & 0                              & \cos\theta                & 0                         & 0 & 0 \\
          \alpha \cos\theta \cos\varphi & -\alpha \sin\theta \sin\varphi & \sin \theta \cos \varphi  & 0                         & 1 & 0 \\
          \alpha \cos\theta \sin\varphi &  \alpha \sin\theta \cos\varphi & \sin \theta \sin \varphi  & 0                         & 0 & 1 \\
          \beta \sin\theta              & 0                              & 0                         & -\cos\theta               & 0 & 0 \\
          -\beta \cos\theta \cos\varphi & \beta \sin\theta \sin\varphi   & 0                         & -\sin \theta \cos \varphi & 1 & 0 \\
          -\beta \cos\theta \sin\varphi & -\beta \sin\theta \cos\varphi  & 0                         & -\sin \theta \sin \varphi & 0 & 1 \\
        \end{array}
      \right| \\
  = -(\alpha + \beta)^2 \cos\theta \sin\theta.
\end{multline}
The~force is then equal to
\begin{align*}
  &F(\mathbf{y}(\Omega)_+, \mathbf{y}(\Omega)_-) \\
  &\quad =
  \int\limits_{-\infty}^{+\infty} \int\limits_{-\infty}^{+\infty}
  \int\limits_{0}^{+\infty} \int\limits_{0}^{+\infty}
  \int\limits_{0}^{2\pi} \int\limits_{0}^{\frac{\pi}{2}}
    (\alpha + \beta)^2 \,
    \mathbf{f}^\mathbf{y}(\mathbf{s}^\mathbf{y} + \alpha \mathbf{m},
                          \mathbf{s}^\mathbf{y} - \beta \mathbf{m})
    \cos\theta \sin\theta \\
  & \quad \quad \quad \quad \quad \quad \quad \quad \quad \quad \quad \quad \quad \quad \quad \quad \quad \quad \quad \quad \quad \quad \quad \quad
  \dx \theta \, \dx \varphi \,
  \dx \alpha \, \dx \beta \,
  \dx s^y_2 \, \dx s^y_3,
\end{align*}
where
\begin{align*}
  \mathbf{s}^\mathbf{y} = (0, s^y_2, s^y_3),
  \quad
  \mathbf{m} = (\cos\theta, \sin\theta \cos\varphi, \sin\theta \sin\varphi).
\end{align*}
Since $\mathbf{m} \cdot \mathbf{n}^\mathbf{y} = \cos \theta$ and~$\sin\theta\, \dx\theta \, \dx\varphi$
is~the~differential solid angle,
the~integral can be~expressed in~a~coordinate-less form
\begin{multline*}
  F(\mathbf{y}(\Omega)_+, \mathbf{y}(\Omega)_-) \\
  =
  \int\limits_{\mathcal{P}^\mathbf{y}}
  \int\limits_{ \mathbb{S}^2_+ }
  \int\limits_{0}^{+\infty} \int\limits_{0}^{+\infty}
    (\alpha + \beta)^2 \,
    \mathbf{f}^\mathbf{y}(\mathbf{s}^\mathbf{y} + \alpha \mathbf{m},
                          \mathbf{s}^\mathbf{y} - \beta \mathbf{m})
    (\mathbf{m} \cdot \mathbf{n}^\mathbf{y}) \,
  \dx \alpha \, \dx \beta \, \dx S(\mathbf{m}) \, \dx S(\mathbf{s}^\mathbf{y}),
\end{multline*}
where
\begin{align*}
  \mathbb{S}^2_+ := \{ \mathbf{m} \in \mathbb{S}^{2}: \mathbf{m}\cdot\mathbf{n}^\mathbf{y} > 0 \}.
\end{align*}
It should be noted that the~integration is not done over the~common boundary
of~the~parts $\mathcal{P}^\mathbf{y} \cap \mathbf{y}(\Omega)$,
since some line segments connecting the~pairs of~interacting points
may intersect the~plane $\mathcal{P}^\mathbf{y}$ outside the~deformed body $\mathbf{y}(\Omega)$.
Yet we set
\begin{multline} \label{eq:tyHemi}
  \mathbf{t}^\mathbf{y}\sny \\
  :=
  \int\limits_{ \mathbb{S}^2_+ }
  \int\limits_{0}^{+\infty} \int\limits_{0}^{+\infty}
    (\alpha + \beta)^2 \,
    \mathbf{f}^\mathbf{y}(\mathbf{s}^\mathbf{y} + \alpha \mathbf{m},
                          \mathbf{s}^\mathbf{y} - \beta \mathbf{m})
    (\mathbf{m} \cdot \mathbf{n}^\mathbf{y}) \,
  \dx \alpha \, \dx \beta \, \dx S(\mathbf{m}),
\end{multline}
but we can now already foresee some aspects of~the~nature of~the~non-local peridynamic interaction
which are treated in~greater detail in~the~Section~\ref{sec:flux}.
Thanks to~the~skew-symmetry of~$\mathbf{f}^\mathbf{y}$ in~its arguments
and to~the~symmetry of~the~integrand in~$\alpha \in (0, +\infty)$ and~$\beta \in (0, +\infty)$,
the~integration over the~opposite hemisphere $\mathbb{S}^2_-$
yields the~same value and~hence
\begin{multline} \label{eq:tyWhole}
  \mathbf{t}^\mathbf{y}\sny \\
  =
  \frac{1}{2}
  \int\limits_{ \mathbb{S}^2 }
  \int\limits_{0}^{+\infty} \int\limits_{0}^{+\infty}
    (\alpha + \beta)^2 \,
    \mathbf{f}^\mathbf{y}(\mathbf{s}^\mathbf{y} + \alpha \mathbf{m},
                          \mathbf{s}^\mathbf{y} - \beta \mathbf{m})
    (\mathbf{m} \cdot \mathbf{n}^\mathbf{y}) \,
  \dx \alpha \, \dx \beta \, \dx S(\mathbf{m}).
\end{multline}
Based on~this result we define the~peridynamic stress tensor $\mathbf{P}^\mathbf{y}$
in~the~following way
\begin{align} \label{eq:Py}
  \mathbf{P}^\mathbf{y}(\mathbf{x}^\mathbf{y})
  =
  \frac{1}{2}
  \int\limits_{ \mathbb{S}^2 }
  \int\limits_{0}^{+\infty} \int\limits_{0}^{+\infty}
    (\alpha + \beta)^2 \,
    \mathbf{f}^\mathbf{y}(\mathbf{x}^\mathbf{y} + \alpha \mathbf{m},
                          \mathbf{x}^\mathbf{y} - \beta \mathbf{m})
    \otimes \mathbf{m} \,
  \dx \alpha \, \dx \beta \, \dx S(\mathbf{m}).
\end{align}
The~peridynamic stress tensor $\mathbf{P}$ is defined via Piola transform as
\begin{align*}
  \mathbf{P} (\mathbf{x})
  :=
  (\det \nabla \mathbf{y}(\mathbf{x}))
  \mathbf{P}^\mathbf{y}(\mathbf{x}^\mathbf{y})
  (\nabla \mathbf{y}(\mathbf{x}))^{-\top},
  \quad
  \mathbf{x}^\mathbf{y} = \mathbf{y}(\mathbf{x}).
\end{align*}
Using the~relation \eqref{eq:f=fy}, the~tensor can~be~expressed
in~terms of~the~pairwise force function $\mathbf{f}$ by
\begin{align*}
  \mathbf{P}
  &
  (\mathbf{x})
  =
  \frac{\det \nabla \mathbf{y}(\mathbf{x})}{2}
  \int\limits_{ \mathbb{S}^2 }
  \int\limits_{0}^{+\infty} \int\limits_{0}^{+\infty}
    \chi_{\mathbf{y}(\Omega)}(\mathbf{y}(\mathbf{x}) + \alpha \mathbf{m})
    \chi_{\mathbf{y}(\Omega)}(\mathbf{y}(\mathbf{x}) - \beta  \mathbf{m}) \\
    &
    \frac{(\alpha + \beta)^2 \,
          \mathbf{f}(\mathbf{y}^{-1}( \mathbf{y}(\mathbf{x}) + \alpha \mathbf{m} ),
                     \mathbf{y}^{-1}( \mathbf{y}(\mathbf{x}) - \beta  \mathbf{m} )
                    )}
         {(\det \nabla \mathbf{y}(\mathbf{y}^{-1}( \mathbf{y}(\mathbf{x}) + \alpha \mathbf{m} )))
          (\det \nabla \mathbf{y}(\mathbf{y}^{-1}( \mathbf{y}(\mathbf{x}) - \beta  \mathbf{m} )))}
    \otimes (\nabla \mathbf{y}(\mathbf{x}))^{-1} \mathbf{m} \\
  &
  \quad \quad \quad \quad \quad \quad \quad \quad \quad \quad \quad \quad \quad \quad \quad \quad \quad \quad \quad \quad \quad \quad \quad \quad \quad \quad
  \dx \alpha \, \dx \beta \, \dx S(\mathbf{m}),
\end{align*}
where $\chi_{\mathbf{y}(\Omega)}$ is the~characteristic function of~$\Omega$.
The~integrand is to~be understood as~zero whenever the~preimage
of~$\mathbf{y}(\mathbf{x}) + \alpha \mathbf{m}$ or~$\mathbf{y}(\mathbf{x}) - \beta \mathbf{m}$
is not well defined.
This final formula is much more complicated than the~one for~peridynamic tensor $\nu$,
but it~reflects the~fact that although the~constitutive theory in~peridynamics is~primarily being done
in~the~reference configuration,
the~forces exert in~the~deformed one.
This will be~treated in~a~greater detail in~next section
(see Example \ref{ex:nuNPy} and~the~subsequent discussion).

\section{Properties of~the~peridynamic tensors} \label{sec:properties}
In~this section we shall investigate the~symmetry of~$\mathbf{P}^\mathbf{y}$,
compute the~divergence of~$\mathbf{P}$,
and provide an~example of~the~deformation
for~which the~tensors $\mathbf{P}$ and $\nu$ differ.

The~question of~symmetry of~the~peridynamic tensor $\mathbf{P}^\mathbf{y}$
is~quite straightforward in~bond-based peridynamics
(the~earlier version of~the~theory proposed in~\cite{Silling2000}).
Here the~balance of~angular momentum reduces to~the~requirement of~parallelism
of~the~exerting force
\begin{align*}
  \mathbf{f}^\mathbf{y}\xxy \,
  \parallel \,
  (\mathbf{x'}^\mathbf{y} - \mathbf{x}^\mathbf{y})
\end{align*}
and~it can be~seen easily in~\eqref{eq:Py} that the~peridynamic tensor
$\mathbf{P}^\mathbf{y}$ is~indeed symmetric.
In~state-based peridynamics (the~latest version appearing in~\cite{Silling2007}),
however, this~does not seem to~be such an~easy task and the~question still remains open.
We will nevertheless explain in~the~Section \ref{sec:flux} why the~possible asymmetry of~$\mathbf{P}^\mathbf{y}$
actually does not have to~contradict the~balance of~angular momentum,
as it might seem at~first sight.
Yet we consider this partial result about symmetry of~$\mathbf{P}^\mathbf{y}$
interesting and will take a~profit from it.

Next we proceed with investigating the~tensors' divergence.
The~divergence of~$\mathbf{P}$ is~computed using the~knowledge of~the~divergence of~$\mathbf{P}^\mathbf{y}$
and~the~properties of~the~Piola transform
which implies
\begin{align} \label{eq:divPiola}
  \dive \mathbf{P} (\mathbf{x})
  =
  (\det \nabla \mathbf{y}(\mathbf{x})) \dive \mathbf{P}^\mathbf{y}(\mathbf{x}^\mathbf{y}),
  \quad
  \mathbf{x}^\mathbf{y} = \mathbf{y}(\mathbf{x}).
\end{align}
The~divergence of~$\mathbf{P}^\mathbf{y}$ can~be~obtained using~the~theorem
about the~divergence of~the~peridynamic tensor $\nu$,
since these two~tensors are~formally identical, see \eqref{eq:nu} and~\eqref{eq:Py}.
The~only difference (but a~crucial one for~the~mechanical interpretation as it can be seen from Example \ref{ex:nuNPy})
is that the~integration in~the~former is~done in~the~deformed configuration
whereas in~the~later in~the~reference one.
Denoting
\begin{align*}
  \mathcal{I} = \{ (\mathbf{x'}^\mathbf{y}, \mathbf{x}^\mathbf{y}) \in \mathbb{R}^3 \times \mathbb{R}^3: \mathbf{x'}^\mathbf{y} = \mathbf{x}^\mathbf{y} \},
\end{align*}
the~transcription of~the~mentioned theorem reads (c.f. \citep[Theorem 6]{SillingFlux2008}):
\begin{theorem}
  Let a~deformation $\mathbf{y}:\Omega \rightarrow \mathbb{R}^3$ be~given,
  let $\mathbf{f}^\mathbf{y}$ be~the~corresponding pairwise force density,
  and~let~$\mathbf{P}^\mathbf{y}$ be~given by~\eqref{eq:Py}.
  If $\mathbf{f}^\mathbf{y}$ is continuously differentiable
  on~$(\mathbb{R}^3 \setminus \partial \mathbf{y}(\Omega)) \times (\mathbb{R}^3 \setminus \partial \mathbf{y}(\Omega)) \setminus \mathcal{I}$
  and if
  \begin{align*}
    \mathbf{f}^\mathbf{y}(\mathbf{x'}^\mathbf{y}, \mathbf{x}^\mathbf{y})
    =
    o(|\mathbf{x'}^\mathbf{y} - \mathbf{x}^\mathbf{y}|^{-2})
    \quad
    \text{as } |\mathbf{x'}^\mathbf{y} - \mathbf{x}^\mathbf{y}| \rightarrow +\infty,
  \end{align*}
  then
  \begin{align*}
    \dive \mathbf{P}^\mathbf{y} (\mathbf{x}^\mathbf{y})
    =
    \int_{\mathbf{y}(\Omega)} \mathbf{f}^\mathbf{y} \xxy \, \dx \mathbf{x'}^\mathbf{y},
    \quad
    \forall \mathbf{x}^\mathbf{y} \in \mathbb{R}^3.
  \end{align*}
\end{theorem}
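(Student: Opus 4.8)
The plan is to compute $\dive\mathbf{P}^\mathbf{y}$ directly from \eqref{eq:Py} by differentiating under the integral sign and then reducing the two radial integrals via integration by parts. Writing the $(i,j)$-entry of $\mathbf{P}^\mathbf{y}$ as $\tfrac12\int_{\mathbb{S}^2}\int_0^\infty\int_0^\infty(\alpha+\beta)^2\,f^y_i(\mathbf{x}^\mathbf{y}+\alpha\mathbf{m},\mathbf{x}^\mathbf{y}-\beta\mathbf{m})\,m_j\,\dx\alpha\,\dx\beta\,\dx S(\mathbf{m})$, the $i$-th component of the divergence collects $\sum_j m_j\,\partial f^y_i/\partial x^y_j$ evaluated at the shifted arguments. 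The decisive observation is that both arguments of $\mathbf{f}^\mathbf{y}$ are displaced along the single direction $\mathbf{m}$, so differentiating the composite function gives $\sum_j m_j\,\partial f^y_i/\partial x^y_j=\mathbf{m}\cdot(\nabla_1+\nabla_2)f^y_i=\partial_\alpha f^y_i-\partial_\beta f^y_i$, where $\nabla_1,\nabla_2$ denote the gradients in the first and second slot. This turns the spatial divergence into a difference of radial derivatives that integration by parts can handle.

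Next I would integrate by parts in $\alpha$, and separately in $\beta$, against the weight $(\alpha+\beta)^2$. Because $|\mathbf{x'}^\mathbf{y}-\mathbf{x}^\mathbf{y}|=\alpha+\beta$ along these rays, the hypothesis $\mathbf{f}^\mathbf{y}=o(|\mathbf{x'}^\mathbf{y}-\mathbf{x}^\mathbf{y}|^{-2})$ forces the boundary term at $+\infty$ to vanish, whereas the endpoints $\alpha=0$ and $\beta=0$ contribute $\beta^2 f^y_i(\mathbf{x}^\mathbf{y},\mathbf{x}^\mathbf{y}-\beta\mathbf{m})$ and $\alpha^2 f^y_i(\mathbf{x}^\mathbf{y}+\alpha\mathbf{m},\mathbf{x}^\mathbf{y})$ respectively. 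The two interior (bulk) terms produced by the two integrations by parts are both equal to $\int_0^\infty\int_0^\infty 2(\alpha+\beta)f^y_i$ and hence cancel exactly in the combination $\partial_\alpha-\partial_\beta$, so that only the two endpoint terms survive.

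It then remains to read each surviving term as a volume integral. In spherical coordinates centred at $\mathbf{x}^\mathbf{y}$ one has $\dx\mathbf{x'}^\mathbf{y}=\alpha^2\,\dx\alpha\,\dx S(\mathbf{m})$, so the $\alpha=0$ endpoint yields $\tfrac12\int_{\mathbb{R}^3}f^y_i(\mathbf{x'}^\mathbf{y},\mathbf{x}^\mathbf{y})\,\dx\mathbf{x'}^\mathbf{y}$. For the $\beta=0$ endpoint I would apply the antipodal substitution $\mathbf{m}\mapsto-\mathbf{m}$, which maps $\mathbb{S}^2$ onto itself and recasts the term as $-\tfrac12\int_{\mathbb{R}^3}f^y_i(\mathbf{x}^\mathbf{y},\mathbf{x'}^\mathbf{y})\,\dx\mathbf{x'}^\mathbf{y}$; invoking the skew-symmetry $\mathbf{f}^\mathbf{y}(\mathbf{a},\mathbf{b})=-\mathbf{f}^\mathbf{y}(\mathbf{b},\mathbf{a})$ turns this into $+\tfrac12\int_{\mathbb{R}^3}f^y_i(\mathbf{x'}^\mathbf{y},\mathbf{x}^\mathbf{y})\,\dx\mathbf{x'}^\mathbf{y}$. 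The two halves sum to the asserted formula, and \eqref{eq:fy=0} shrinks the domain of integration from $\mathbb{R}^3$ to $\mathbf{y}(\Omega)$.

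I expect the genuine difficulty to be analytic rather than algebraic: namely, justifying the interchange of differentiation and integration and the two integrations by parts despite the singular set $\mathcal{I}$ (the corner $\alpha=\beta=0$) and the boundary $\partial\mathbf{y}(\Omega)$ excluded in the regularity hypothesis. The assumed $C^1$-smoothness off $\mathcal{I}$ keeps the integrand differentiable for fixed $\mathbf{m}$ on each open ray, and the endpoint evaluations at $\alpha=0$ (with $\beta>0$) and $\beta=0$ (with $\alpha>0$) stay off the diagonal, so they are well defined; the weight $(\alpha+\beta)^2$ together with the decay hypothesis should furnish the integrable majorant needed both to differentiate under the integral and to discard the term at infinity. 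The remaining care is to ensure that the iterated integrals converge near the corner, so that Fubini legitimises the swap of integration order used to cancel the bulk terms.
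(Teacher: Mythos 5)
Your computation is correct in substance, but note how it sits relative to the paper: the paper contains no proof of this theorem at all --- it is imported as a transcription of \citep[Theorem 6]{SillingFlux2008}, the only observation being that $\mathbf{P}^\mathbf{y}$ in \eqref{eq:Py} and $\nu$ in \eqref{eq:nu} are formally identical, with the integration carried out in the deformed rather than the reference configuration. Your argument --- the ray identity $\sum_j m_j\,\partial f^y_i/\partial x^y_j=(\partial_\alpha-\partial_\beta)f^y_i$, the exact cancellation of the two bulk terms $2(\alpha+\beta)f^y_i$, and the conversion of the surviving endpoint terms into volume integrals via $\dx\mathbf{x'}^\mathbf{y}=\alpha^2\,\dx\alpha\,\dx S(\mathbf{m})$ --- is therefore a self-contained reconstruction of the cited proof rather than a reproduction of anything in this paper; what it buys is independence from the reference. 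Two corrections of record. First, your endpoint labels are swapped: the $\alpha=0$ boundary term is $-\beta^{2}f^{y}_{i}(\mathbf{x}^\mathbf{y},\mathbf{x}^\mathbf{y}-\beta\mathbf{m})$, and it is \emph{this} term that needs the antipodal substitution $\mathbf{m}\mapsto-\mathbf{m}$ and skew-symmetry, while the $\beta=0$ term $+\alpha^{2}f^{y}_{i}(\mathbf{x}^\mathbf{y}+\alpha\mathbf{m},\mathbf{x}^\mathbf{y})$ passes directly to $\tfrac12\int f^{y}_{i}(\mathbf{x'}^\mathbf{y},\mathbf{x}^\mathbf{y})\,\dx\mathbf{x'}^\mathbf{y}$; this is harmless, since the assembled total is right. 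Second, the skew-symmetry $\mathbf{f}^\mathbf{y}(\mathbf{a},\mathbf{b})=-\mathbf{f}^\mathbf{y}(\mathbf{b},\mathbf{a})$ you invoke is not among the theorem's stated hypotheses --- without it the computation only yields $\tfrac12\int\bigl(f^{y}_{i}(\mathbf{x'}^\mathbf{y},\mathbf{x}^\mathbf{y})-f^{y}_{i}(\mathbf{x}^\mathbf{y},\mathbf{x'}^\mathbf{y})\bigr)\,\dx\mathbf{x'}^\mathbf{y}$ --- so state it explicitly; it is the standing peridynamic assumption (Newton's third law), inherited by $\mathbf{f}^\mathbf{y}$ from $\mathbf{f}$ through \eqref{eq:f=fy}, and the paper itself uses it when passing from \eqref{eq:tyHemi} to \eqref{eq:tyWhole}.

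The one genuinely delicate point is one you name but whose proposed remedy does not reach. By \eqref{eq:fy=0}, $\mathbf{f}^\mathbf{y}$ in general \emph{jumps} when either argument crosses $\partial\mathbf{y}(\Omega)$ --- exactly the set excluded from the $C^1$ hypothesis --- and for $\mathbf{x}^\mathbf{y}$ within interaction range of the boundary these jump locations $\alpha^{*}(\mathbf{x}^\mathbf{y},\mathbf{m})$, $\beta^{*}(\mathbf{x}^\mathbf{y},\mathbf{m})$ move with $\mathbf{x}^\mathbf{y}$. There an integrable majorant controls magnitudes but not the concentrated contributions of a moving discontinuity: differentiating under the integral with the classical a.e.\ derivative loses them, and an honest piecewise integration by parts along each ray would then leave uncompensated terms $(\alpha^{*}+\beta)^{2}\bigl[f^{y}_{i}\bigr]$ at each crossing, so your chain of steps is only formally correct near $\partial\mathbf{y}(\Omega)$ (away from it, with finite horizon, it is fully rigorous as written). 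The clean repair exploits the structure you already isolated: since the index of $m_j$ is contracted with $\partial/\partial x^y_j$, only the directional derivative along $\mathbf{m}$ of the inner double integral is ever needed. Compute it by the shift $\mathbf{x}^\mathbf{y}\mapsto\mathbf{x}^\mathbf{y}+h\mathbf{m}$ combined with the change of variables $\alpha'=\alpha+h$, $\beta'=\beta-h$: the integrand $(\alpha'+\beta')^{2}\,\mathbf{f}^\mathbf{y}(\mathbf{x}^\mathbf{y}+\alpha'\mathbf{m},\mathbf{x}^\mathbf{y}-\beta'\mathbf{m})$ becomes $h$-independent, all $h$-dependence sits in the limits of integration $\alpha'\in(h,+\infty)$, $\beta'\in(-h,+\infty)$, and differentiating at $h=0$ produces exactly your two endpoint terms and nothing else, with no differentiability of $\mathbf{f}^\mathbf{y}$ along the ray required (the endpoint evaluations stay off $\mathcal{I}$, as you observed). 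The decay hypothesis still enters where you placed it: to kill the contribution at infinity and to dominate the difference quotients before the $\dx S(\mathbf{m})$ integration.
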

The~continuity of~$\mathbf{f}^\mathbf{y}$ in $\mathbf{y}(\Omega)$
is determined both by~the~regularity of~the~deformation
and~by~the~smoothness of~the~constitutive relation
(this means for~instance that there cannot be~a~jump on~the~horizon neither a~blow-up near $\mathcal{I}$);
however, due to ~\eqref{eq:fy=0},
there may be a~discontinuity located
on~$\partial \mathbf{y}(\Omega) \times \partial \mathbf{y}(\Omega)$
which is~therefore excluded.
The~condition of~the~decay at~infinity is~satisfied for~any material with~finite horizon.

This theorem together with~\eqref{eq:divPiola}, \eqref{eq:f=fy}
and substitution formula implies the~desired result
%
%
\begin{multline} \label{eq:divP}
  \dive \mathbf{P} (\mathbf{x})
  =
  (\det \nabla \mathbf{y}(\mathbf{x})) \dive \mathbf{P}^\mathbf{y}(\mathbf{x}^\mathbf{y})
  =
  (\det \nabla \mathbf{y}(\mathbf{x}))
  \int_{\mathbf{y}(\Omega)} \mathbf{f}^\mathbf{y} \xxy \, \dx \mathbf{x'}^\mathbf{y} \\
  =
  \int_{\Omega}
    \mathbf{f}^\mathbf{y} \xxy
    (\det \nabla \mathbf{y}(\mathbf{x'})) (\det \nabla \mathbf{y}(\mathbf{x})) \,
  \dx \mathbf{x'}
  =
  \int_{\Omega}
    \mathbf{f}\xx \,
  \dx \mathbf{x'}.
\end{multline}

Although the~tensors $\mathbf{P}$ and~$\nu$ have the~same divergence,
they are not~eq\-ual, as~the~following example shows.
\begin{example} \label{ex:nuNPy}
  Let us consider a~non-homogeneous deformation $\mathbf{y}:\mathbb{R}^3 \rightarrow \mathbb{R}^3$ given by~the~formula
  \begin{align*}
    \mathbf{y}(x_1, x_2, x_3) = (x_1, x_2 + x_3^3, x_3)
  \end{align*}
  and~a~pairwise force function
  \begin{align*}
    \mathbf{f}\xx
    =
    \gamma(|\mathbf{x'} - \mathbf{x}|)
    (|\mathbf{y}(\mathbf{x'}) - \mathbf{y}(\mathbf{x})|^2 - |\mathbf{x'} - \mathbf{x}|^2)
    (\mathbf{y}(\mathbf{x'}) - \mathbf{y}(\mathbf{x}))
  \end{align*}
  which is~a~particular example of~the~class of~materials introduced in~\citep[eq. (49)]{Silling2000}.
  The~so-called \emph{shielding function} $\gamma:[0, +\infty) \rightarrow [0, +\infty)$
  is~supposed to~be~sufficiently smooth and~to~vanish for~$|\mathbf{x'} - \mathbf{x}| \geq \delta$,
  where $\delta$ is~the~horizon.
  Since
  \begin{align} \label{eq:ExGrad0}
    \nabla \mathbf{y}(\mathbf{0}) = \mathbf{I},
  \end{align}
  the~Piola transform is~also identity and hence $\mathbf{P}(\mathbf{0}) = \mathbf{P}^\mathbf{y}(\mathbf{0})$.
  This implies that $\mathbf{P}(\mathbf{0})$ is~symmetric
  and~therefore it is sufficient to~show the~non-symmetry of~$\nu(0)$.
  To~show this we compute $\nu_{12}(0)$ and~$\nu_{21}(0)$
  which will turn out to~be~different.
  The~integral over $\mathbb{S}^2$ in~\eqref{eq:nu} can be~rewritten using spherical coordinates
  \begin{align*}
    \mathbf{m} &= (\cos \theta, \cos \varphi \sin \theta, \sin \varphi \sin \theta),
    \quad
    \dx S(\mathbf{m}) = \sin \theta \, \dx \theta \dx \varphi, \\
    \theta &\in (0, \pi),
    \quad
    \varphi \in (0, 2\pi)
  \end{align*}
  and then a~straightforward calculation shows that
  \begin{align*}
    \nu_{12}(0)
    &=
    \frac{4\pi}{35}
    \int\limits_0^{+\infty} \int\limits_0^{+\infty}
      (\alpha + \beta)^4 (\alpha^3 + \beta^3) \gamma(\alpha + \beta) \,
    \dx \alpha \dx \beta
  \end{align*}
  and
  \begin{align*}
    \nu_{21}(0)
    &=
    \nu_{12}(0)
    +
    \frac{2\pi}{11}
    \int\limits_0^{+\infty} \int\limits_0^{+\infty}
      (\alpha + \beta)^2 (\alpha^3 + \beta^3)^3 \gamma(\alpha + \beta) \,
    \dx \alpha \dx \beta.
  \end{align*}
  Hence, for~an~appropriate choice of~the~function $\gamma$, indeed $\nu_{12}(0) \neq \nu_{21}(0)$.
\end{example}

The~reason why the~tensor
\begin{multline*}
  \nu^\mathbf{y}(\mathbf{x})
  =
  \frac{1}{2 \det \nabla \mathbf{y}(\mathbf{x})} \\
    \int\limits_{\mathbb{S}^{2}} \int\limits_0^{+\infty} \int\limits_0^{+\infty}
      (\alpha + \beta)^2 \mathbf{f}(\mathbf{x} + \alpha\mathbf{m}, \mathbf{x} - \beta\mathbf{m}, t)
      \otimes
      \nabla \mathbf{y}(\mathbf{x}) \mathbf{m} \,
    \dx \alpha \dx \beta \dx S(\mathbf{m})
\end{multline*}
(the~inverse Piola transform of~$\nu(\mathbf{x})$)
is not symmetric at~the~origin
,while the~tensor $\mathbf{P}^\mathbf{y}$ is,
lies in~the~fact that the~original tensor $\nu$ is computed in~the~reference configuration,
taking into account only local transformation of~geometry described by $\nabla \mathbf{y}$.
Since peridynamics is~a~non-local theory
the~transformation within the~whole horizon has to~be~incorporated
(which is done in~the~definition of~$\mathbf{P}^{\mathbf{y}}$).
For~a~non-homogeneous deformation, such as the~one in~the~example,
it happens for~$\mathbf{x'} = \alpha \mathbf{m}$ and~$\mathbf{x} = -\beta \mathbf{m}$ that
\begin{align*}
  \mathbf{x'} - \mathbf{x} \parallel \mathbf{m}
  \quad
  \text{, whilst}
  \quad
  \mathbf{f}\xx \nparallel \nabla \mathbf{y}(\mathbf{0}) \mathbf{m} = \mathbf{m}
\end{align*}
for almost every $\mathbf{m} \in \mathbb{S}^2$ (see Fig. \ref{fig:Example} for~illustration).
This means that $\mathbf{m}$ is not~the~direction under which the~points $\mathbf{x'}$ and~$\mathbf{x}$
exert force upon each other in~the~deformed configuration
and hence $\nu(0)$ does not describe the~force-flux properly.
This geometric inconsistency makes therefore the~mechanical interpretation of~$\nu$
presented in~\citep[sec. 6]{SillingFlux2008} problematic.

\begin{figure}
  \includegraphics[width=\textwidth]{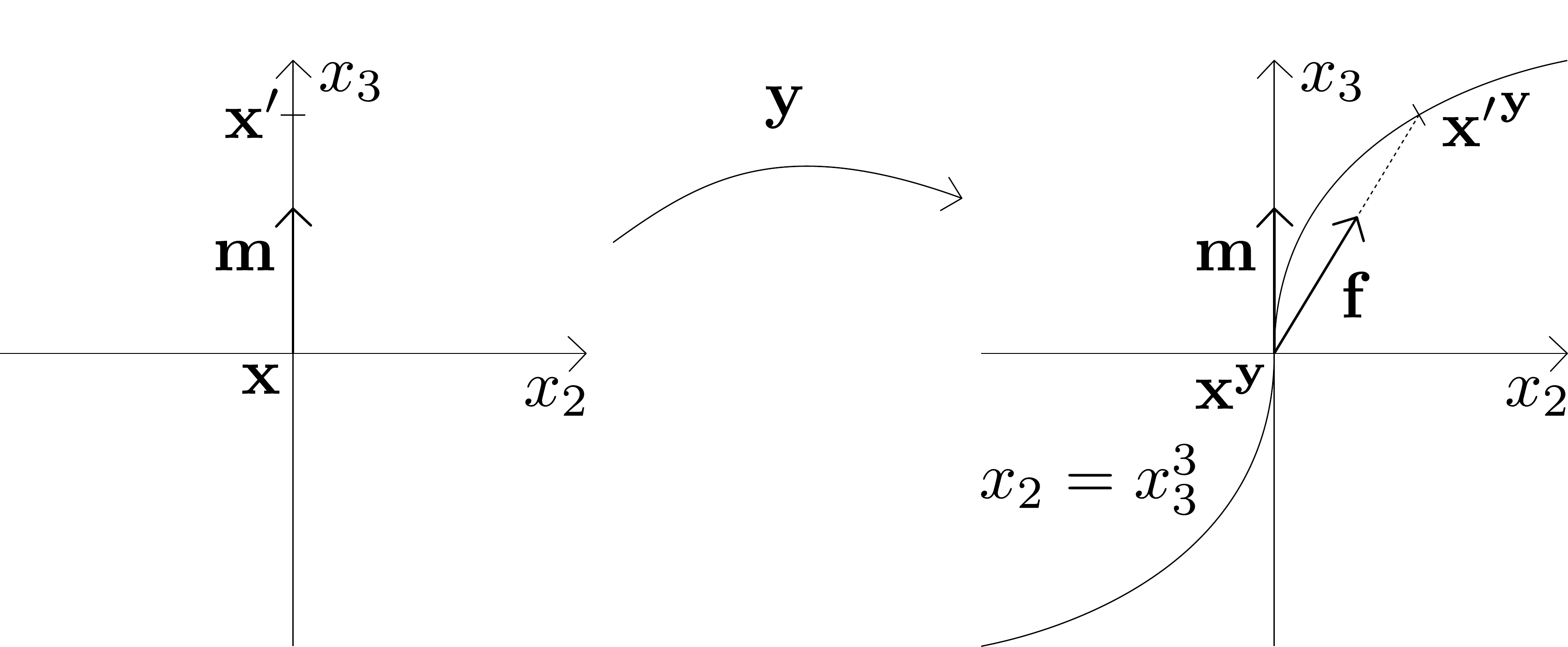}
  \caption{The~geometric inconsistency in~the~definition of~the~tensor $\nu$}
  \label{fig:Example}
\end{figure}

We end up this section by~an~investigation of~the~limiting behavior
of~the~peridynamic tensor $\mathbf{P}$ for~horizon tending to~zero.
To~do so,
we have to~specify first what~the~pairwise force function $\mathbf{f}$ depends on.
For~the~sake of~clarity, we restrict ourselves to~a~simpler constitutive relation of~bond-based peridynamics
thought the~following steps may be easily generalized also for~state-based peridynamics.
Let then
\begin{align*}
  \mathbf{f} \xx
  =
  \hat{\mathbf{f}}(\mathbf{y}(\mathbf{x'}) - \mathbf{y}(\mathbf{x}), \mathbf{x'} - \mathbf{x}, \mathbf{x}),
\end{align*}
where $\hat{\mathbf{f}}: \mathbb{R}^3 \times \mathbb{R}^3 \times \Omega \rightarrow \mathbb{R}^3$ is such that
\begin{align*}
  \hat{\mathbf{f}}(\cdot, \tilde{\mathbf{x}}, \cdot) \equiv 0
  \quad
  \text{whenever } |\tilde{\mathbf{x}}| \geq \delta
\end{align*}
and $\delta$ is some fixed horizon.
After performing the~scaling in~the~same spirit as~in~\cite{SillingLehoucq2008}
we obtain a~family of~peridynamic tensors
\begin{multline} \label{eq:Ps}
  \mathbf{P}_s
  (\mathbf{x})
  :=
  \frac{\det \nabla \mathbf{y}(\mathbf{x})}{2}
  \int\limits_{ \mathbb{S}^2 }
  \int\limits_{0}^{+\infty} \int\limits_{0}^{+\infty} \\
    (\alpha + \beta)^2
    \chi_{\mathbf{y}(\Omega)}(\mathbf{y}(\mathbf{x}) + s\alpha \mathbf{m})
    \chi_{\mathbf{y}(\Omega)}(\mathbf{y}(\mathbf{x}) - s\beta  \mathbf{m}) \\
    \frac{\hat{\mathbf{f}}((\alpha + \beta) \mathbf{m},
                           \frac{\mathbf{y}^{-1}( \mathbf{y}(\mathbf{x}) + s\alpha \mathbf{m} )
                                 -
                                 \mathbf{y}^{-1}( \mathbf{y}(\mathbf{x}) - s\beta  \mathbf{m} )}
                                {s},
                           \mathbf{y}^{-1}( \mathbf{y}(\mathbf{x}) - s\beta  \mathbf{m} )
                          )}
         {(\det \nabla \mathbf{y}(\mathbf{y}^{-1}( \mathbf{y}(\mathbf{x}) + s\alpha \mathbf{m} )))
          (\det \nabla \mathbf{y}(\mathbf{y}^{-1}( \mathbf{y}(\mathbf{x}) - s\beta  \mathbf{m} )))} \\
  \quad \quad \quad \quad \quad \quad \quad \quad \quad \quad \quad \quad \quad \quad \quad \quad \quad \quad \quad
    \otimes (\nabla \mathbf{y}(\mathbf{x}))^{-1} \mathbf{m} \,
  \dx \alpha \, \dx \beta \, \dx S(\mathbf{m})
\end{multline}
indexed by~the~dimensionless parameter $s \searrow  0_+$
which measures the~non-locality.

Although the~formulae for~the~tensors $\mathbf{P}$ and $\nu$ seem to~be very different at~first sight,
the~following theorem shows that their collapsed counterparts are surprisingly equal.
\begin{theorem}
  Let~$\Omega \subset \mathbb{R}^3$ be~bounded domain and~$\overline{\Omega}$ denote its closure.
  Let moreover $\hat{\mathbf{f}}: \mathbb{R}^3 \times \mathbb{R}^3 \times \overline{\Omega} \rightarrow \mathbb{R}^3$
  be~continuous,
  $\mathbf{y}: \Omega \rightarrow \mathbf{y}(\Omega)$
  and $\mathbf{y}^{-1}: \mathbf{y}(\Omega) \rightarrow \Omega$ continuously differentiable
  and $\det \nabla \mathbf{y} > 0$ in~$\Omega$.
  Then
  \begin{align*}
    \mathbf{P}_0(\mathbf{x})
    :=
    \lim_{s \rightarrow 0_+} \mathbf{P}_s
    =
    \nu_0(\mathbf{x}).
  \end{align*}
\end{theorem}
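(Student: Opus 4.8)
The plan is to pass to the limit $s \to 0_+$ directly under the integral sign in the definition \eqref{eq:Ps} of $\mathbf{P}_s$ and then to recognise the result, after one change of variables, as $\nu_0$. Fix $\mathbf{x} \in \Omega$ and abbreviate $\mathbf{F} := \nabla \mathbf{y}(\mathbf{x})$. First I would identify the pointwise limit of the integrand. Because $\mathbf{y}^{-1}$ is continuously differentiable, the reference bond appearing as the second argument of $\hat{\mathbf{f}}$ satisfies
\[
  \frac{\mathbf{y}^{-1}(\mathbf{y}(\mathbf{x}) + s\alpha\mathbf{m}) - \mathbf{y}^{-1}(\mathbf{y}(\mathbf{x}) - s\beta\mathbf{m})}{s}
  \longrightarrow
  (\alpha+\beta)\,\nabla\mathbf{y}^{-1}(\mathbf{y}(\mathbf{x}))\,\mathbf{m}
  = (\alpha+\beta)\,\mathbf{F}^{-1}\mathbf{m},
\]
while the third argument tends to $\mathbf{x}$, the two determinants in the denominator tend to $\det\mathbf{F}$, and both characteristic functions equal $1$ once $s$ is small (here $\mathbf{y}(\mathbf{x})$ is interior to the open set $\mathbf{y}(\Omega)$). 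Continuity of $\hat{\mathbf{f}}$ then yields the pointwise limit of the integrand, namely $(\alpha+\beta)^2 (\det\mathbf{F})^{-2}\,\hat{\mathbf{f}}((\alpha+\beta)\mathbf{m},(\alpha+\beta)\mathbf{F}^{-1}\mathbf{m},\mathbf{x}) \otimes \mathbf{F}^{-1}\mathbf{m}$.

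The interchange of limit and integral is where the real work lies, and I expect the uniform domination to be the main obstacle. The crucial input is the finite horizon: since $\hat{\mathbf{f}}(\cdot,\tilde{\mathbf{x}},\cdot) \equiv 0$ for $|\tilde{\mathbf{x}}| \geq \delta$, on the set where the integrand is nonzero one has $|\mathbf{p} - \mathbf{q}| < s\delta$, where $\mathbf{p}$ and $\mathbf{q}$ denote the two preimages $\mathbf{y}^{-1}(\mathbf{y}(\mathbf{x}) + s\alpha\mathbf{m})$ and $\mathbf{y}^{-1}(\mathbf{y}(\mathbf{x}) - s\beta\mathbf{m})$, both of which lie in $\Omega$ by the characteristic functions. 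Fixing a compact ball $\overline{B} \subset \Omega$ about $\mathbf{x}$ on which $\mathbf{y}$ is Lipschitz with constant $L$, the identity $s(\alpha+\beta) = |\mathbf{y}(\mathbf{p}) - \mathbf{y}(\mathbf{q})| \leq L\,|\mathbf{p}-\mathbf{q}| < L s\delta$ forces the uniform cutoff $\alpha + \beta < L\delta$ on the support. The one point needing care is that $\mathbf{p},\mathbf{q}$ must indeed lie in $\overline{B}$: if one of them escaped a fixed neighbourhood of $\mathbf{x}$ while the other stayed near, then $|\mathbf{p}-\mathbf{q}|$ would be bounded below by a positive constant exceeding $s\delta$ for small $s$, contradicting the support condition; turning this observation together with the local bi-Lipschitz estimates into a genuinely $s$-independent, integrable dominating function is the delicate step. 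Granting it, dominated convergence applies and gives
\[
  \mathbf{P}_0(\mathbf{x})
  = \frac{1}{2\det\mathbf{F}}
    \int_{\mathbb{S}^2}\!\int_0^{+\infty}\!\!\int_0^{+\infty}
    (\alpha+\beta)^2\,
    \hat{\mathbf{f}}\big((\alpha+\beta)\mathbf{m},(\alpha+\beta)\mathbf{F}^{-1}\mathbf{m},\mathbf{x}\big)
    \otimes \mathbf{F}^{-1}\mathbf{m}\,
    \dx\alpha\,\dx\beta\,\dx S(\mathbf{m}).
\]

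Next I would produce the analogous closed form for $\nu_0$ by subjecting $\nu$ from \eqref{eq:nu} to the same scaling: its scaled family replaces the deformed bond by the difference quotient $s^{-1}\big(\mathbf{y}(\mathbf{x}+s\alpha\mathbf{m}) - \mathbf{y}(\mathbf{x}-s\beta\mathbf{m})\big)$ and retains $(\alpha+\beta)\mathbf{m}$ as the reference bond, so that letting $s\to 0_+$ (by the same dominated-convergence argument) gives
\[
  \nu_0(\mathbf{x})
  = \frac{1}{2}
    \int_{\mathbb{S}^2}\!\int_0^{+\infty}\!\!\int_0^{+\infty}
    (\alpha+\beta)^2\,
    \hat{\mathbf{f}}\big((\alpha+\beta)\mathbf{F}\mathbf{m},(\alpha+\beta)\mathbf{m},\mathbf{x}\big)
    \otimes \mathbf{m}\,
    \dx\alpha\,\dx\beta\,\dx S(\mathbf{m}).
\]
Finally I would collapse both triple integrals to a single integral over the bond space $\mathbb{R}^3$ and match them. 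Using $\int_0^{+\infty}\!\int_0^{+\infty} (\alpha+\beta)^2\,\Psi(\alpha+\beta)\,\dx\alpha\,\dx\beta = \int_0^{+\infty} r^3\,\Psi(r)\,\dx r$ and then the polar substitution $\mathbf{z} = r\mathbf{m}$ (so that $r^2\,\dx r\,\dx S(\mathbf{m}) = \dx\mathbf{z}$) both expressions become
\[
  \nu_0(\mathbf{x}) = \frac{1}{2}\int_{\mathbb{R}^3} \hat{\mathbf{f}}(\mathbf{F}\mathbf{z},\mathbf{z},\mathbf{x}) \otimes \mathbf{z}\,\dx\mathbf{z},
  \qquad
  \mathbf{P}_0(\mathbf{x}) = \frac{1}{2\det\mathbf{F}}\int_{\mathbb{R}^3} \hat{\mathbf{f}}(\mathbf{w},\mathbf{F}^{-1}\mathbf{w},\mathbf{x}) \otimes \mathbf{F}^{-1}\mathbf{w}\,\dx\mathbf{w}.
\]
The linear substitution $\mathbf{w} = \mathbf{F}\mathbf{z}$, whose Jacobian $\det\mathbf{F} > 0$ cancels the prefactor $1/\det\mathbf{F}$, carries the second integral exactly onto the first, whence $\mathbf{P}_0(\mathbf{x}) = \nu_0(\mathbf{x})$ and the theorem follows.
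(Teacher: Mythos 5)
Your proposal is correct and follows essentially the same route as the paper's proof: identify the pointwise limit of the integrand in \eqref{eq:Ps}, pass to the limit by dominated convergence, collapse the $(\alpha,\beta)$-integration into a single radial variable so the triple integral becomes a bond integral over $\mathbb{R}^3$, and finish with the linear substitution $\mathbf{w} = \mathbf{F}\mathbf{z}$ whose Jacobian $\det\mathbf{F}$ cancels the prefactor. The only minor differences are presentational: you re-derive the closed form of $\nu_0$ by scaling $\nu$ where the paper simply cites the known formula (eq.~(50) of Silling and Lehoucq), you use the identity $\int_0^{+\infty}\!\int_0^{+\infty}(\alpha+\beta)^2\Psi(\alpha+\beta)\,\dx\alpha\,\dx\beta=\int_0^{+\infty}r^3\Psi(r)\,\dx r$ in place of the paper's explicit substitution $\alpha=p-\beta$ followed by Fubini, and you actually sketch the finite-horizon domination bound $\alpha+\beta< L\delta$ that the paper leaves entirely implicit behind its appeal to the Lebesgue theorem.
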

\begin{proof}
  It is easy to see that the~integrand in~\eqref{eq:Ps} converges point-wisely to
  \begin{align*}
    \left( \frac{\alpha + \beta}{\det \nabla \mathbf{y}(\mathbf{x})} \right)^2
    \hat{\mathbf{f}}((\alpha + \beta) \mathbf{m}, (\nabla \mathbf{y}(\mathbf{x}))^{-1} (\alpha + \beta) \mathbf{m}, \mathbf{x} )
    \otimes
    (\nabla \mathbf{y}(\mathbf{x}))^{-1} \mathbf{m}.
  \end{align*}
  Thanks to our assumptions we may use the~Lebesgue dominated convergence theorem
  to~interchange the~order of~limit and~integration
  and hence
  \begin{multline*}
    \mathbf{P}_0(\mathbf{x})
    =
    \frac{1}{2 \det \nabla \mathbf{y}(\mathbf{x})}
    \int\limits_{ \mathbb{S}^2 }
    \int\limits_{0}^{+\infty} \int\limits_{0}^{+\infty}
      (\alpha + \beta)^2
      \\
      \hat{\mathbf{f}}((\alpha + \beta) \mathbf{m}, (\nabla \mathbf{y}(\mathbf{x}))^{-1} (\alpha + \beta) \mathbf{m}, \mathbf{x} )
      \otimes
      (\nabla \mathbf{y}(\mathbf{x}))^{-1} \mathbf{m} \,
    \dx \alpha \, \dx \beta \, \dx S(\mathbf{m}).
  \end{multline*}

  Similarly as~in~\cite{SillingLehoucq2008},
  we perform the~substitution
  \begin{align*}
    \alpha = p - \beta, \quad \dx \alpha = \dx p, \quad p \in (\beta, +\infty)
  \end{align*}
  leading to
  \begin{multline*}
    \mathbf{P}_0(\mathbf{x})
    :=
    \frac{1}{2 \det \nabla \mathbf{y}(\mathbf{x})}
    \int\limits_{ \mathbb{S}^2 }
    \int\limits_{0}^{+\infty} \int\limits_{\beta}^{+\infty} \\
      p^2\,
      \hat{\mathbf{f}}(p \mathbf{m}, (\nabla \mathbf{y}(\mathbf{x}))^{-1} p \mathbf{m}, \mathbf{x} )
      \otimes
      (\nabla \mathbf{y}(\mathbf{x}))^{-1} \mathbf{m} \,
    \dx p \, \dx \beta \, \dx S(\mathbf{m}).
  \end{multline*}
  Using the~Fubini theorem, we may interchange the~order of~integration with respect to~$p$ and~$\mathbf{\beta}$
  and then integrate $\beta$ form $0$ to~$p$
  which yields an~additional power of~$p$.
  Hence
  \begin{multline*}
    \mathbf{P}_0(\mathbf{x})
    =
    \frac{1}{2 \det \nabla \mathbf{y}(\mathbf{x})}
    \int\limits_{ \mathbb{S}^2 }
    \int\limits_{0}^{+\infty}
      p^2\,
      \hat{\mathbf{f}}(p \mathbf{m}, (\nabla \mathbf{y}(\mathbf{x}))^{-1} p \mathbf{m}, \mathbf{x} )
      \otimes
      (\nabla \mathbf{y}(\mathbf{x}))^{-1} p \mathbf{m} \, \\
    \dx p \, \dx S(\mathbf{m}),
  \end{multline*}
  which is~nothing but~the~volume integral over $\mathbb{R}^3$ with respect to~$\tilde{\mathbf{y}} := p \mathbf{m}$,
  i.e.
  \begin{align*}
    \mathbf{P}_0(\mathbf{x})
    =
    \frac{1}{2 \det \nabla \mathbf{y}(\mathbf{x})}
    \int\limits_{ \mathbb{R}^3 }
      \hat{\mathbf{f}}(\tilde{\mathbf{y}}, (\nabla \mathbf{y}(\mathbf{x}))^{-1} \tilde{\mathbf{y}}, \mathbf{x} )
      \otimes
      (\nabla \mathbf{y}(\mathbf{x}))^{-1} \tilde{\mathbf{y}} \,
    \dx \tilde{\mathbf{y}}.
  \end{align*}
  Finally using a~substitution
  \begin{align*}
    \tilde{\mathbf{y}} = \nabla \mathbf{y}(\mathbf{x}) \, \tilde{\mathbf{x}},
    \quad
    \tilde{\mathbf{x}} \in \mathbb{R}^3
  \end{align*}
  we see that
  \begin{align*}
    \mathbf{P}_0(\mathbf{x})
    =
    \frac{1}{2}
    \int\limits_{ \mathbb{R}^3 }
      \hat{\mathbf{f}}(\nabla \mathbf{y}(\mathbf{x}) \tilde{\mathbf{x}}, \tilde{\mathbf{x}}, \mathbf{x} )
      \otimes
      \tilde{\mathbf{x}} \,
    \dx \tilde{\mathbf{x}}
  \end{align*}
  which is exactly $\nu_0(\mathbf{x})$ for~bond-based peridynamics (see \citep[eq. (50)]{SillingLehoucq2008}).
\end{proof}


This result seems to~be of~a~particular interest.
It shows that the~geometric inconsistency contained in~the~peridynamic tensor $\nu$
vanishes in~the~li\-mit of~small horizon.
Moreover it implies that identifying of~the~1st Piola-Kirchhoff stress tensor with the~collapsed tensor $\nu_0$
is now equivalent to~identifying it with the~collapsed tensor $\mathbf{P}_0$.
In~the~next section we shall argue that such an~identification is possible.

\section{The~correspondence between the~Cauchy stress tensor and the~peridynamic stress tensor $\mathbf{P}^\mathbf{y}$}
\label{sec:flux}
A~very natural question is whether the~peridynamic tensor $\mathbf{P}^\mathbf{y}$
is indeed the~Cauchy stress tensor $\mathbf{T}^\mathbf{y}$.
We shall prove that it is \emph{not} the~case.
We begin with a~generalization of~the~procedure used in~the~Section \ref{sec:derivation}
for~determining the~form of~the~peridynamic tensor $\mathbf{P}^\mathbf{y}$.
This will lead us to~a~general expression for~force flux between two sufficiently regular regions.
Based on~its knowledge we shall conclude what the~relation between these tensors is.

The~necessary connection between the~local and non-local interaction is provided
by~the~formulae \eqref{eq:PFFunction}, \eqref{eq:f=fy}, and~\eqref{eq:CVector}.
For~a~given material and~its deformation we need to~find a~vector field
$\mathbf{t}^\mathbf{y}:\mathbf{y}(\Omega)\times\mathbb{S}^2 \rightarrow \mathbb{R}^3$
s.t. for~any two adjacent spatial regions $A^\mathbf{y}, B^\mathbf{y} \subset \mathbf{y}(\Omega)$
the~mutual force interaction can be expressed as~the~surface integral
of~$\mathbf{t}^\mathbf{y}$, i.e.
\begin{align} \label{eq:PFFy-CT}
  \int_{B^\mathbf{y}} \int_{A^\mathbf{y}}
      \mathbf{f}^\mathbf{y}\xxy \,
    \dx \mathbf{x'}^\mathbf{y} \dx \mathbf{x}^\mathbf{y}
  = \int_{\partial A^\mathbf{y} \cap \partial B^\mathbf{y}}
      \mathbf{t}^\mathbf{y} \sny \,
    \dx S\sy.
\end{align}
Moreover, in~simple materials
the~vector $\mathbf{t}^\mathbf{y}$ can depend on~the~surface
$\partial A^\mathbf{y} \cap \partial B^\mathbf{y}$
only through its normal vector $\mathbf{n}^\mathbf{y}$ at~point $\mathbf{s}^\mathbf{y}$.

The~problem is that the~substitution \eqref{eq:CauchySubst},
thought it seems to~be very natural,
cannot be~applied in~general.
For~example there may be several intersections of~the~line segment $[\mathbf{x'}^\mathbf{y}, \mathbf{x}^\mathbf{y}]$
and~$\partial A^\mathbf{y} \cap \partial B^\mathbf{y}$
or~there may be even no~intersection at~all (see Fig.~\ref{fig:FluxBad}).
Nevertheless, in~the~case that there is exactly one intersection for~every interacting pair
(which holds for~example for~a~convex set and~its complement),
we may proceed further.
The~difference is that now $\mathbf{s}^\mathbf{y}$ is a~map which locally describes the~boundary
and the~range of~the~lengths $\alpha$ and~$\beta$
may depend on~the~direction $\mathbf{m}$ 
(see Fig.~\ref{fig:FluxBad} for~illustration).
Otherwise the~calculations are performed in~a~similar way
yielding the~following form of~the~mutual force
\begin{multline*}
  F\ABy \\
  =
  \int\limits_{\mathcal{S}}
  \int\limits_{ \mathbb{S}^2_+(\mathbf{s}^\mathbf{y}) }
  \int\limits_{0}^{\tilde{\beta}} \int\limits_{0}^{\tilde{\alpha}}
    (\alpha + \beta)^2 \,
    \mathbf{f}^\mathbf{y}(\mathbf{s}^\mathbf{y} + \alpha \mathbf{m},
                          \mathbf{s}^\mathbf{y} - \beta \mathbf{m})
    (\mathbf{m} \cdot \mathbf{n}^\mathbf{y}) \,
  \dx \alpha \, \dx \beta \, \dx S(\mathbf{m}) \, \dx S(\mathbf{s}^\mathbf{y}),
\end{multline*}
where
\begin{align*}
  \mathcal{S}
  =
  \partial A^\mathbf{y} \cap \partial B^\mathbf{y},
  \quad
  \mathbb{S}^2_+(\mathbf{s}^\mathbf{y})
  =
  \{ \mathbf{m} \in \mathbb{S}^{2}: \mathbf{m}\cdot\mathbf{n}^\mathbf{y}(\mathbf{s}^\mathbf{y}) > 0 \}
\end{align*}
and $\tilde{\alpha}$ and~$\tilde{\beta}$ are~functions
\begin{align*}
  \tilde{\alpha} = \tilde{\alpha}(A^\mathbf{y}, \mathbf{s}^\mathbf{y}, \mathbf{m}),
  \quad
  \tilde{\beta} = \tilde{\alpha}(B^\mathbf{y}, \mathbf{s}^\mathbf{y}, \mathbf{m})
\end{align*}
such that
\begin{align*}
  \forall \mathbf{s}^\mathbf{y} \in \mathcal{S} \,
  \forall \mathbf{m} \in \mathbb{S}^2_+(\mathbf{s}^\mathbf{y}):
  \quad
  \mathbf{s}^\mathbf{y} + \alpha \mathbf{m} \in A^\mathbf{y}
  &\Leftrightarrow
  \alpha \in (0, \tilde{\alpha})
  \text{ and} \\
  \mathbf{s}^\mathbf{y} - \beta \mathbf{m} \in B^\mathbf{y}
  &\Leftrightarrow
  \beta \in (0, \tilde{\beta}).
\end{align*}

\begin{figure}
  \includegraphics[width=\textwidth]{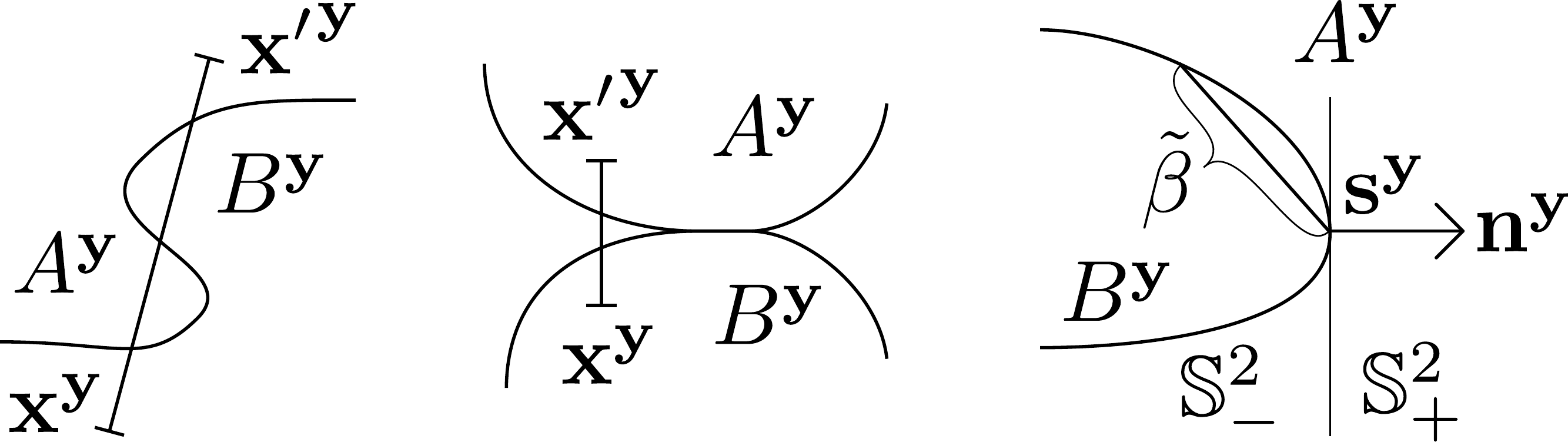}
  \caption{Issues arising from~non-locality}
  \label{fig:FluxBad}
\end{figure}

Hence the~force flux
\begin{align} \label{eq:tyNonlocal}
  \mathbf{t}^\mathbf{y} \sy
  = 
  \int\limits_{ \mathbb{S}^2_+(\mathbf{x}^\mathbf{y}) }
  \int\limits_{0}^{\tilde{\beta}} \int\limits_{0}^{\tilde{\alpha}}
    (\alpha + \beta)^2 \,
    \mathbf{f}^\mathbf{y}(\mathbf{s}^\mathbf{y} + \alpha \mathbf{m},
                          \mathbf{s}^\mathbf{y} - \beta \mathbf{m})
    (\mathbf{m} \cdot \mathbf{n}^\mathbf{y}) \,
  \dx \alpha \, \dx \beta \, \dx S(\mathbf{m})
\end{align}
from $A^\mathbf{y}$ to~$B^\mathbf{y}$ is~strictly non-local
since it~depends on~the~contact surface $\partial A^\mathbf{y} \cap \partial B^\mathbf{y}$
not only through~the~normal vector $\mathbf{n}^\mathbf{y}$ at~a~point,
but it~involves its~nontrivial part close to~the~point $\mathbf{s}^\mathbf{y}$.
To see this, let us consider a~ball $C^\mathbf{y} \subset \mathbf{y}(\Omega)$ and~its tangent plane $\mathcal{P}^\mathbf{y}$.
Let their intersection be denoted as $\mathbf{s}^\mathbf{y}$.
Moreover let $A^\mathbf{y}$ be~the~half-space containing the~ball $C^\mathbf{y}$
and $B^\mathbf{y}$ denote the~other one (see Fig.~\ref{fig:FluxDiff}).
Although the~normal vector $\mathbf{n}^\mathbf{y}$ at~the~point $\mathbf{s}^\mathbf{y}$ is the~same
for~both surfaces,
the~fluxes from $A^\mathbf{y}$ to~$B^\mathbf{y}$
and from $C^\mathbf{y}$ to~its complement differ at~this point.
This fact actually shows that the~peridynamic non-local force interaction
cannot be described by~a~tensor in~the~sense of~\eqref{eq:CVector} and~\eqref{eq:CTensor},
which is, however, a~fundamental assumption in~the~classical theory of~simple materials.

This resembles to~the~situation in~so-called non-simple materials
(which can be, according to~\cite{BazantJirasek2002}, understood in some sense as non-local)
where the~Cauchy stress vector may depend, besides the~surface normal vector $\mathbf{n}^\mathbf{y}$,
also on~the surface curvature (c.f. \cite{Toupin1962}, \cite{Toupin1964} or~\cite{Fried2006}).
It should be also noted that the~possible asymmetry in~the~integration bounds of~$\alpha$ and~$\beta$
makes the~step from~\eqref{eq:tyHemi} to~\eqref{eq:tyWhole} impossible in~general.
It is therefore very surprising that the~tensor $\mathbf{P}^\mathbf{y}$
provides, by~its divergence, the~correct total force flux from $A^\mathbf{y}$ to~$B^\mathbf{y}$
despite the~fact that the~two fluxes \eqref{eq:tyWhole} and \eqref{eq:tyNonlocal}
differ at~each point where the~boundary is~curved.

\begin{figure}
  \centering
  \includegraphics[width=0.4\textwidth]{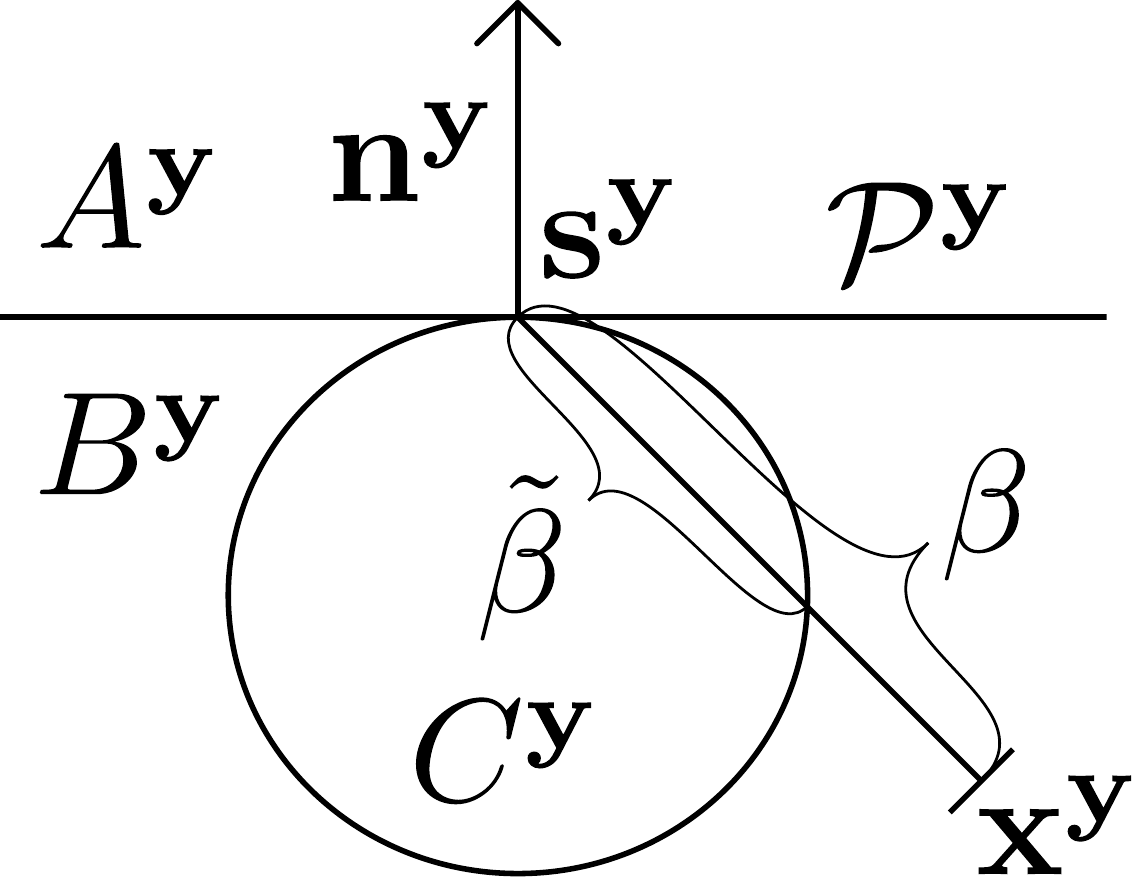}
  \caption{Different fluxes for~the~same unit normal vector.}
  \label{fig:FluxDiff}
\end{figure}

Yet we anticipate that in~the~limit of~vanishing non-locality the~situation changes
and that the~collapsed peridynamic tensor $\mathbf{P}^{\mathbf{y}}_0$
can be identified with the~Cauchy stress tensor (and hence $\mathbf{P}_0$ with the~1st Piola-Kirchhoff tensor).
The~idea is as~follows.
Let~two adjacent regions
$\mathbf{A}^\mathbf{y},\, \mathbf{B}^\mathbf{y} \subset \mathbf{y}(\Omega)$
be such that the~interacting pairs $(\mathbf{x'}^\mathbf{y},\, \mathbf{x}^\mathbf{y})$
for~which the~substitution \eqref{eq:CauchySubst} cannot be used
vanish in~the~limit.
If moreover the~common boundary at~some point $\mathbf{s}^\mathbf{y} \in \partial A^\mathbf{y} \cap \partial B^\mathbf{y}$
can be~approximated by~its tangent plane,
then the~force flux \eqref{eq:tyNonlocal} through the~boundary at~the~point $\mathbf{s}^\mathbf{y}$ is likely to~converge
to~the~force flux \eqref{eq:tyWhole} through its tangent plane.
Hence it~seems that in~the~limit the~equation \eqref{eq:PFFy-CT}
may hold for~broader class of~adjacent regions
than just two parts of~the~deformed body divided by a~plane.

\section*{Discussion}
In~our opinion, the~biggest advantage of~our approach
is the~use of~the substitution formula \eqref{eq:CauchySubst}
which provides a~better insight into the~problem.
It~leads to~an~expression \eqref{eq:tyNonlocal} for~the~non-local force flux
which is subsequently helpful for~proving
that the~peridynamic interaction cannot be described by a~tensor in~general;
however, when only interactions through planes are considered,
the~formula can be simplified to~\eqref{eq:tyWhole} and~the~flux has a~tensorial character.

Based upon this simplified formula we defined the~peridynamic tensor $\mathbf{P}$,
whose divergence turned out to~be of~the~correct form \eqref{eq:divP}.
Moreover, thanks to~this mechanical interpretation
(i.e. a~force flux through a~plane),
the~derived peridynamic tensor $\mathbf{P}$
seems to~be more convenient than the~peridynamic tensor $\nu$
which has only the~correct divergence.
The~problem with the~tensor $\nu$, as~shown in~the~Example \ref{ex:nuNPy},
is that it is computed in~the reference configuration
disregarding the~non-local transformation of~the~geometry due to~the~deformation.
This example also shows that the~two tensors $\nu$ and~$\mathbf{P}$ differ
though some kind of~uniqueness result for~the~former was presented in~\cite{SillingFlux2008}.

As~was already mentioned, the~formula \eqref{eq:tyNonlocal} for~the~non-local force flux shows
that the~peridynamic tensor $\mathbf{P}$ can not be considered as~the~1st Piola-Kirchhoff stress tensor.
On~the~other hand, the~same formula may be used for~proving
that in~the~limit of~vanishing non-locality this is no longer true for~the~collapsed tensor $\mathbf{P}_0$.
Fortunately it also holds that this tensor coincides with the~collapsed peridynamic tensor $\nu_0$
computed in~\cite{SillingLehoucq2008}.
This provides the~explanation
why the~tensor $\nu_0$ may provide the~description of~the~limiting model in~local elasticity.

The~last thing to~be discussed is symmetry of~the~tensors.
In~the~bond-based peridynamic there is no ambiguity left
since the~peridynamic tensors $\mathbf{P}^\mathbf{y}$ and $\mathbf{P}$ pose the~right symmetries.
In~the~state based peridynamic, the~question of~symmetry remains open;
however, since these tensors are no longer supposed to~coincide with the~Cauchy and the~1st Piola-Kirchhoff tensor respectively,
their symmetry is no longer relevant.
What is of true importance is the~symmetry of~their collapsed counterparts
$\mathbf{P}^\mathbf{y}_0$ and $\mathbf{P}_0$.
Since it holds that the~tensors $\mathbf{P}_0$ and $\nu_0$ are equal,
one can use the~result for~the~latter which was proved in~\cite{SillingLehoucq2008}.
It says that the~tensor $\nu_0$ poses the~same symmetry as the~1st Piola-Kirchhoff does
provided the~balance of~angular momentum in~state-based peridynamic is satisfied.

Concerning the~further research,
the~possible next step could be the~incorporation of~the~boundary conditions
both for~finite horizon and~the~limiting case.
It~would be also worthy to~specify the~sufficient regularity
under which the~identification of~the~collapsed tensor $\mathbf{P}_0$
and the~1st Piola-Kirchhoff tensor may be proved rigorously.
The~main difficulty is to~select a~family of~sufficiently regular control volumes
which is at~the~same time preserved by~the~deformation.
Despite a~lot of~effort (c.f. \cite{Noll1974} or \cite{Ziemer1983})
no~such a~selection is still perfect (see \cite{Noll2010}).

\section*{Acknowledgments}
This research was performed within the grants
MŠMT project 7AMB16AT015,
GAČR-FWF project 16-34894L,
DAAD-AVČR project DAAD-16-14
and SVV-2017-260455.
The author is deeply thankful to~Martin Kružík, Ondřej Souček
and Vít Průša for~inspiring conceptual discussions.

\bibliographystyle{plainnat}     
\bibliography{literatura}

\begin{thebibliography}{18}
\providecommand{\natexlab}[1]{#1}
\providecommand{\url}[1]{\texttt{#1}}
\expandafter\ifx\csname urlstyle\endcsname\relax
  \providecommand{\doi}[1]{doi: #1}\else
  \providecommand{\doi}{doi: \begingroup \urlstyle{rm}\Url}\fi

\bibitem[Bažant and Jirásek(2002)]{BazantJirasek2002}
Z.P. Bažant and M.~Jirásek.
\newblock Nonlocal integral formulations of plasticity and damage: Survey of
  progress.
\newblock \emph{Journal of Engineering Mechanics}, 128\penalty0 (11):\penalty0
  1643--1670, 2002.

\bibitem[Bobaru and Hu(2012)]{BobaruHu2012}
F.~Bobaru and W.~Hu.
\newblock The meaning, selection, and use of the peridynamic horizon and its
  relation to crack branching in brittle materials.
\newblock \emph{International Journal of Fracture}, 176\penalty0 (2):\penalty0
  215--222, 2012.

\bibitem[Cauchy(1828)]{Cauchy1828}
A.~L. Cauchy.
\newblock \emph{De~la~pression ou~tension dans un~système de~points
  matériels}.
\newblock A Paris, chez De Bure frères, Libraires du Roi et de la Bibliotheque
  du Roi, 1828.

\bibitem[Ciarlet(1988)]{Ciarlet1988}
G.P. Ciarlet.
\newblock \emph{Mathematical Elasticity, Volume I: Three-dimensional
  Elasticity}.
\newblock Elsevier Science Publisher, 1988.

\bibitem[Emmrich et~al.(2013)Emmrich, Lehoucq, and Puhst]{EmmrichLehoucq2012}
E.~Emmrich, R.B. Lehoucq, and D.~Puhst.
\newblock \emph{Peridynamics: A Nonlocal Continuum Theory}, pages 45--65.
\newblock Springer Berlin Heidelberg, 2013.

\bibitem[Fried and E.(2006)]{Fried2006}
E.~Fried and Gurtin~M. E.
\newblock Tractions, balances, and~boundary conditions for~nonsimple materials
  with application to~liquid flow at~small-length scales.
\newblock \emph{Arch. Rational Mech. Anal.}, 182:\penalty0 513--554, 2006.

\bibitem[Gurtin et~al.(2010)Gurtin, Fried, and Anand]{GurtinAnand2010}
M.E. Gurtin, E.~Fried, and L.~Anand.
\newblock \emph{The Mechanics and Termodynamics of Continua}.
\newblock Cambridge University Press, 2010.

\bibitem[Love(1892)]{Love1892}
A.~Love.
\newblock \emph{A~Treatise on~the~Mathematical Theory of~Elasticity}.
\newblock Cambridge University Press, 1892.

\bibitem[Noll(1974)]{Noll1974}
W.~Noll.
\newblock \emph{The~Foundations of~Classical Mechanics in~the~Light of~Recent
  Advances in~Continuum Mechanics}, pages 31--47.
\newblock Springer Berlin Heidelberg, Berlin, Heidelberg, 1974.

\bibitem[Noll(2010)]{Noll2010}
W.~Noll.
\newblock Thoughts on the concept of stress.
\newblock \emph{Journal of Elasticity}, 100\penalty0 (1):\penalty0 25--32,
  2010.

\bibitem[Silling(2000)]{Silling2000}
S.A. Silling.
\newblock Reformulation of elasticity theory for discontinuities and long range
  forces.
\newblock \emph{Journal of the Mechanics and Physics of Solids}, 48:\penalty0
  175--209, 2000.

\bibitem[Silling and Lehoucq(2008{\natexlab{a}})]{SillingFlux2008}
S.A. Silling and R.B. Lehoucq.
\newblock Force flux and the peridynamic stress tensor.
\newblock \emph{Journal of the Mechanics and Physics of Solids}, 56\penalty0
  (4):\penalty0 1566–1577, 2008{\natexlab{a}}.

\bibitem[Silling and Lehoucq(2008{\natexlab{b}})]{SillingLehoucq2008}
S.A. Silling and R.B. Lehoucq.
\newblock Convergence of peridynamics to classical elasticity theory.
\newblock \emph{Journal of Elasticity}, 93:\penalty0 13--37,
  2008{\natexlab{b}}.

\bibitem[Silling and Lehoucq(2010)]{SillingLehoucq2010}
S.A. Silling and R.B. Lehoucq.
\newblock Peridynamic theory of solid mechanics.
\newblock \emph{Advances in applied mechanics}, 44:\penalty0 74--168, 2010.

\bibitem[Silling et~al.(2007)Silling, Epton, Weckner, Xu, and
  Askari]{Silling2007}
S.A. Silling, E.~Epton, O.~Weckner, J.~Xu, and E.~Askari.
\newblock Peridynamic states and constitutive modeling.
\newblock \emph{Journal of Elasticity}, 88:\penalty0 151--184, 2007.

\bibitem[Toupin(1962)]{Toupin1962}
R.~A. Toupin.
\newblock Elastic materials with couple-stresses.
\newblock \emph{Archive for~Rational Mechanics and~Analysis}, 11\penalty0
  (1):\penalty0 385--414, 1962.

\bibitem[Toupin(1964)]{Toupin1964}
R.~A. Toupin.
\newblock Theories of~elasticity with couple-stress.
\newblock \emph{Archive for~Rational Mechanics and~Analysis}, 17\penalty0
  (2):\penalty0 85--112, 1964.

\bibitem[Ziemer(1983)]{Ziemer1983}
W.~P. Ziemer.
\newblock Cauchy flux and~sets of~finite perimeter.
\newblock \emph{Archive for Rational Mechanics and Analysis}, 84\penalty0
  (3):\penalty0 189--201, 1983.

\end{thebibliography}

\newpage
\todototoc
\listoftodos[Notes]

\end{document}